\def\BibTeX{{\rm B\kern-.05em{\sc i\kern-.025em b}\kern-.08em
    T\kern-.1667em\lower.7ex\hbox{E}\kern-.125emX}}
\newtheorem{theorem}{Theorem}
\begin{document}

\title{CLARA: A Constrained Reinforcement Learning Based Resource Allocation Framework for \\ Network Slicing}


\author{
\IEEEauthorblockN{Yongshuai Liu\IEEEauthorrefmark{1}, Jiaxin Ding\IEEEauthorrefmark{2}, Zhi-Li Zhang\IEEEauthorrefmark{3}, Xin Liu\IEEEauthorrefmark{1}}
    \IEEEauthorblockA{\IEEEauthorrefmark{1}Department of Computer Science, University of California, Davis}
    \IEEEauthorblockA{\IEEEauthorrefmark{2}John Hopcroft Center for Computer Science, Shanghai Jiao Tong University 
    \IEEEauthorblockA{\IEEEauthorrefmark{3}Department of Computer Science and Engineering, University of Minnesota}
    Emails: yshliu@ucdavis.edu, jiaxinding@sjtu.edu.cn, zhzhang@cs.umn.edu, xinliu@ucdavis.edu}
}


\maketitle
\begin{abstract}
As mobile networks proliferate, we are experiencing a strong diversification of services, which requires greater flexibility from the existing network. Network slicing is proposed as a promising solution for resource utilization in 5G and future networks to address this dire need. In network slicing, dynamic resource orchestration and network slice management are crucial for maximizing resource utilization. Unfortunately, this process is too complex for traditional approaches to be effective due to a lack of accurate models and dynamic hidden structures. We formulate the problem as a Constrained Markov Decision Process (CMDP) without knowing models and hidden structures. Additionally, we propose to solve the problem using CLARA, a Constrained reinforcement LeArning based Resource Allocation algorithm. In particular, we analyze cumulative and instantaneous constraints using adaptive interior-point policy optimization and projection layer, respectively. Evaluations show that CLARA clearly outperforms baselines in resource allocation with service demand guarantees. 
\end{abstract}

\begin{IEEEkeywords}
Resource Allocation, Network Slicing, 5G,  Constraints, Deep Reinforcement Learning 
\end{IEEEkeywords}





\section{Introduction}
As mobile networks proliferate, we face a broad variety of services, including autonomous driving, industry 4.0, virtual/augmented reality, and the Internet of Things (IoT), etc. These services are characterized by heterogeneous performance, functional, security and operational requirements~\cite{lu2015safeguard,qu2015improved,lu2017sense,liu2018less}, which demand the network to embed more flexibility. Because of the lack of flexibility in existing networks, independent initiatives of dedicated infrastructure solutions are deployed: 3GPP has developed an IoT specific MAC that can co-exist with general purpose MAC~\cite{3gpp2015cellular}; the industry has deployed proprietary architectures for extreme reliability~\cite{li2017review}. Such monolithic vertical developments are clearly highly expensive and inefficient. 


Network slicing~\cite{foukas2017network} is a promising solution for flexible resource provisioning in 5G and future networks. It creates multiple virtual instances, named network slices, over physical infrastructures enabled by network function virtualization (NFV) and software-defined networking (SDN).
Different slices can accommodate different service demands, such as 
1) Mobile broadband (high throughput/low latency); 
2) massive IoT communication (massive connection/bursty traffic); 
3) video surveillance system (high edge processing requirement); 
4) autonomous driving (ultra  reliability/low latency); 
5) VR/AR/real-time gaming (low latency/ultra high throughput); 
6) industry 4.0 automation (ultra reliability/security). 
Network slicing can take place at different places on the network, including core cloud, edge cloud, radio resource management, RAN processing, spectrum, and radio frontend.~\cite{alliance20155g, foukas2017network, nikaein2015network}. Generally, it aims to manage and allocate different resources to meet various service demands. 

Network slicing is essentially a generalized resource allocation problem that manages heterogeneous resources to meet service demands in accordance with the dynamics of the complex network. It is essential to orchestrate network slices dynamically to maximize resource efficiency~\cite{zhang2017network}. 
Nevertheless, resource allocation is a highly complex issue in network slicing, which traditional approaches cannot resolve efficiently or effectively. 

First, traditional optimization methods only incorporate models with precisely known parameters, which is often impossible in practice, especially as 5G and future networks become increasingly complex, large, and diverse in service. 
It is further complicated by the constraints from the physical system and service requirements, such as latency requirements and service level agreements.
 The network environment is affected by the location, geographical properties (e.g., tall buildings, hills, highways, etc), mobility, etc., all difficult to measure and model. 
 The precise network traffic of a time slot is not available, when the network slicing decision is made at the beginning of the time slot. 
 Because of these reasons, it is very difficult to obtain accurate model parameters in real networks, on which traditional optimization approaches rely.

Additionally, traditional methods do not accommodate epistemic uncertainty, manifested as hidden structures in networks, resulting from a lack of knowledge and subsequent ability to explore and learn from the network. For example, a user who experiences a poor quality of service may decide not to use the service again or at a reduced frequency. Such hidden structures can significantly affect user experience and network performance but are usually not directly observed/modeled. 

Facing these challenges, learning-based approaches are beneficial for addressing these challenges because they have the ability to explore and learn from a network without assuming the prior knowledge of accurate models. 
The industry has recognized machine learning (ML) as a core technology for future telecommunication networks, including 5G and beyond~\cite{itfuture}. 
A growing number of recent research studies have demonstrated significant performance improvement using learning-based networks, e.g.,~\cite{mao2017neural,uzakgider2015learning,mao2016resource,ChuaiInfocom2019,BaoBigData2016,zhang2019macs,sengupta2018hotdash}.  
However, they either only consider one step of optimization (e.g. multi-armed bandit~\cite{ChuaiInfocom2019}) or do not analyze the resource allocation problem with constraints imposed by the service requirements, which is crucial for network slicing.

In this work, we propose CLARA, a Constrained reinforcement LeArning based Resource Allocation framework for network slicing. 
Thanks to NFV, we can focus our resource allocation decisions on the virtualized resources. 
We first model the problem as a Constrained Markov Decision Process (CMDP) without knowing prior knowledge. The objective is to maximize a reward in a long run, e.g. throughput over time. At the same time, it is subject to a number of constraints. 
The constraints abstracted from system capacity limits and service requirements are formulated as two types of constraints, cumulative and instantaneous constraints.  
A cumulative constraint requires that the sum of a quality is within a certain limit, e.g., outage probability or average throughput, etc, while an instantaneous constraint requires that the quality needs to satisfy a condition in each time slot, e.g. resource limits and service latency requirement, etc. 
Instantaneous constraints can be further divided into explicit and implicit instantaneous constraints. An explicit constraint has a closed-form expression that can be numerically checked, e.g., transmission power and spectrum available, etc. 
An implicit  constraint does not have an accurate closed-form formulation due to the complexity of the system, e.g., latency and interference level, etc.


We develop efficient reinforcement learning algorithms  for network slicing under both cumulative and instantaneous constraints with theoretical analysis on the performance bound. 
Specifically, to deal with cumulative constraints, we propose our adaptive constrained RL algorithm improved over  Interior-point Policy Optimization (IPO)~\cite{liu2020ipo}. 
For instantaneous constraints, we project a resource allocation decision generated by the reinforcement learning algorithm to its nearest feasible decision 
at the end of policy neural network~\cite{dalal2018safe,bhatia2019resource}. 
%
%


\section{Network Slicing}
\subsection{System Description}
\label{Sec:sys}
%

\begin{figure*}[tbh!]
     \centering
          \hfill
     \begin{subfigure}[t]{0.32\textwidth}
         \centering
         \includegraphics[width=\textwidth]{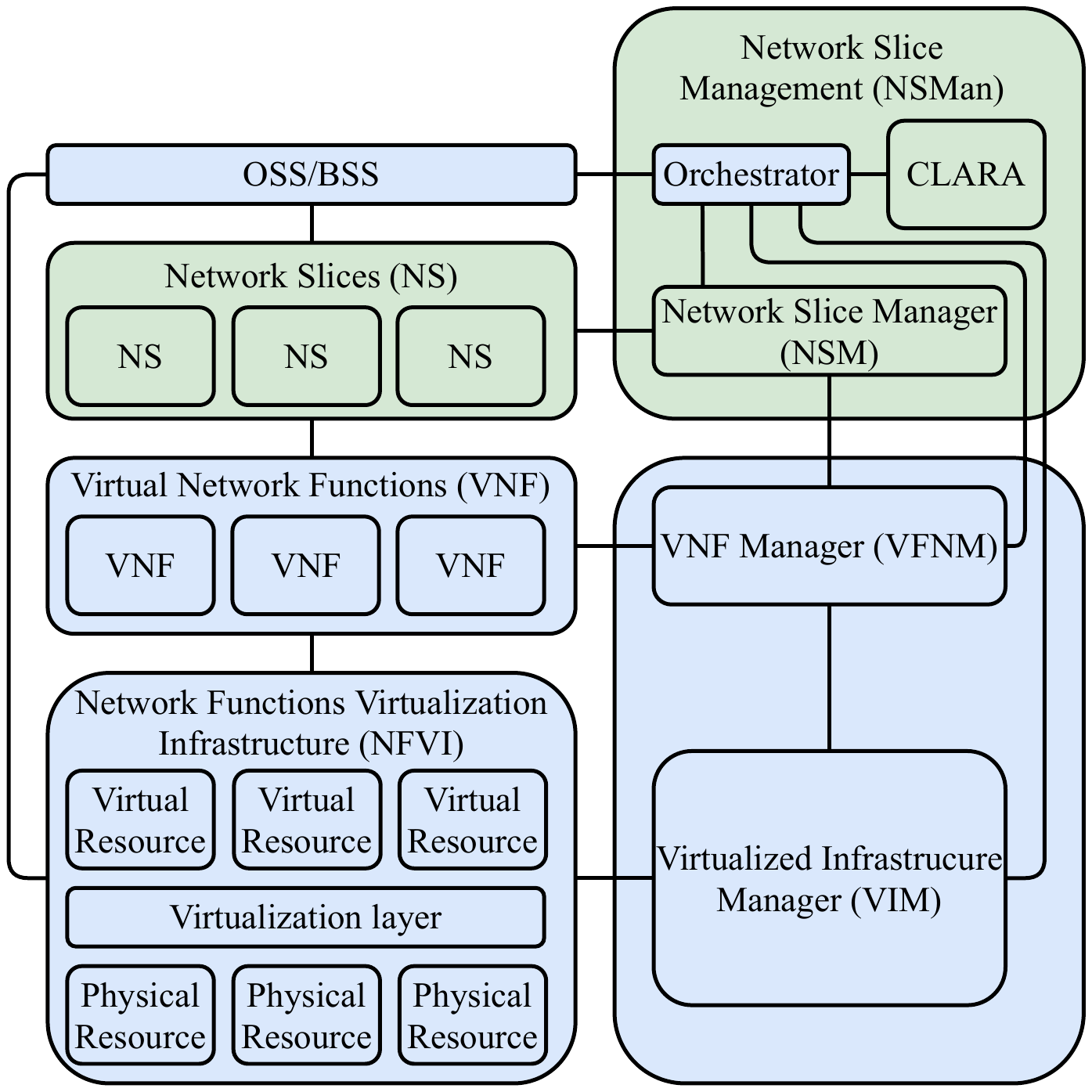}
         \caption{The network slicing architecture with CLARA}
         \label{fig:VNF}
     \end{subfigure}
     \hfill
     \begin{subfigure}[t]{0.63\textwidth}
         \centering
         \includegraphics[width=\textwidth]{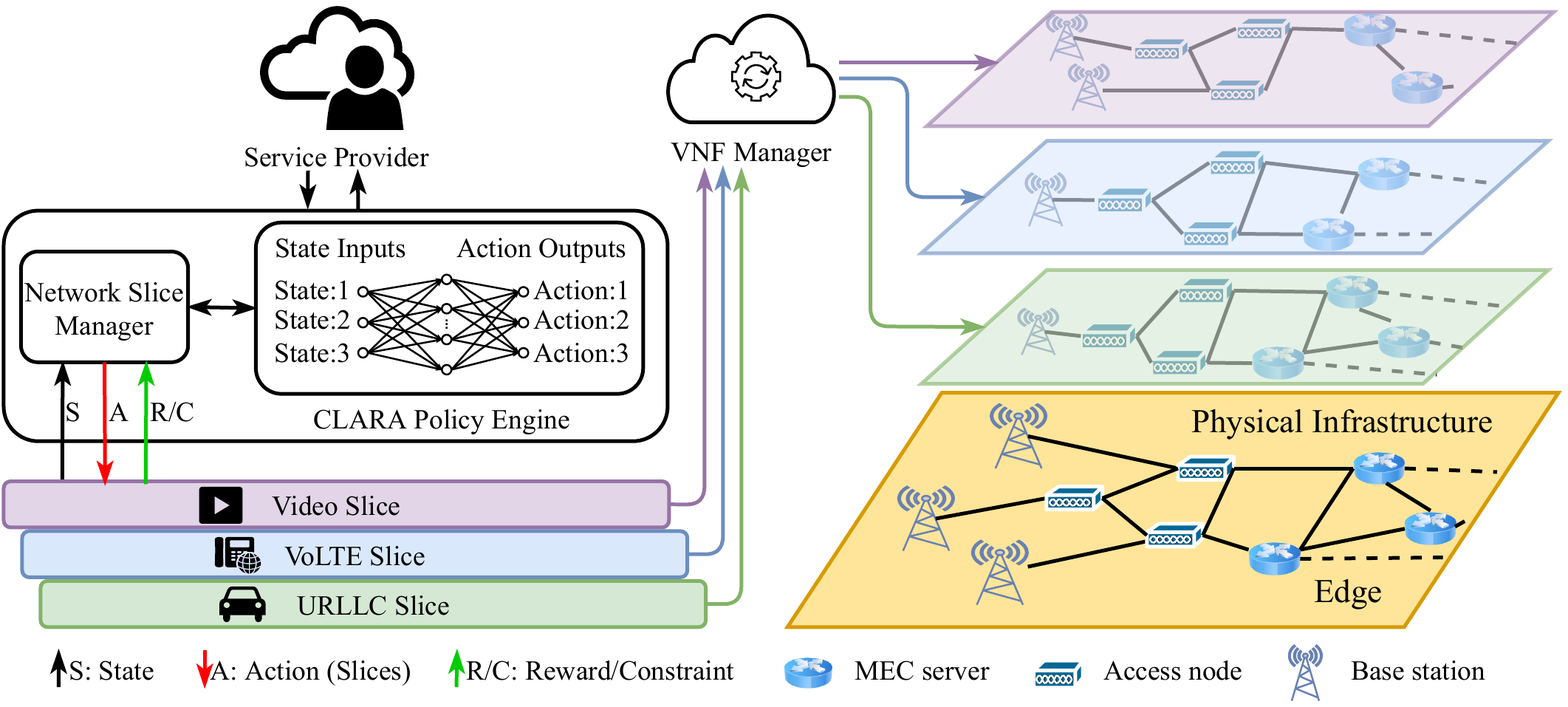}
         \caption{A use case of  network slicing with CLARA}
         \label{fig:NS}
     \end{subfigure}
     \caption{Network slicing with CLARA}
     \end{figure*}
We briefly describe the network slicing architecture with CLARA for resource allocation management, as shown in Fig.~\ref{fig:VNF}. It is developed from the ETSI reference model~\cite{etsi2015network} colored in blue, and our contribution is highlighted with green. 

From bottom to top, Fig.~\ref{fig:VNF} first shows that Virtualized Infrastructure Management (VIM) is in charge of the Network Functions Virtualization Infrastructure (NFVI),
which converts the physical resources to virtual resources, through the Virtualization layer. 
Thereafter, Virtual Network Function (VNF) consists of the virtual resources, which is monitored, configured, and controlled by the VNF manager. 
These VNFs providing different service functionalities finally make up the Network Slice (NS). 
Because of NFV, we are able to allocate resources according to virtual resources. 
The Network Slicing Manager (NSM) is responsible for the initialization, configuration, and managing the life cycles of the network slices. 
The Orchestrator automatizes the management of the elements on the lower levels, communicates and coordinates all those control managers to schedule tasks and update states of the system. 
The Orchestrator is linked to Operations Support System/Business Support System (OSS/BSS) which provides the services. 
Our CLARA, a policy neural network trained with constrained RL algorithms, is in charge of providing solutions for the resource allocation on network slices to the NSM and the Orchestrator, and receive feedbacks to improve the policy on further decisions.  

Fig.~\ref{fig:NS} shows CLARA in details.
The service providers submit the service requirements to CLARA. 
The Network Slice Manager monitors the current system state and sends it to CLARA. 
Based on the system state and the service requirements, CLARA proposes the resource allocation plan to the NSM. 
The NSM configure the network slices with the proposed plan. 
This plan is passed to the VNF manager and lower level components to further map the virtual resources to physical resources. 
For each configuration in a decision time slot, the NSM monitors the system and network slices to measure the rewards(constraints) collected and send the rewards(constraints) to CLARA for further policy improvement. 

\subsection{Radio Access Network Slicing}\label{sec:scenario} 
To be clear, we consider how to apply CLARA for radio access scenario with hidden dynamics. Specifically, give a list of slices $1,...,N$ sharing bandwidth $B$. CLARA learns a policy to give a bandwidth allocation $(b_1,...,b_N)$ to maximize the throughput over time while satisfying certain constraints.


We simulate a scenario containing a Base Station (BS) with three types of services (i.e., Video, VoLTE, URLLC). 
Each service has a random number (max 100 for each type) of users located surrounding the BS. The users arrive according to Poisson distributions (initial mean value 50, 50, and 10 respectively).  
User requests of VoIP and video services take the parameter settings of VoLTE and video streaming models, while URLLC service takes the parameter settings of FTP 2 model~\cite{URLLC}, the same setting seen in~\cite{li2018deepReinforce}, similar settings also in~\cite{zhang2019macs}. The parameters are described in Table~\ref{table:paremeter} based on their respective streaming model.
 This BS is fixed and given a certain bandwidth (100 Mbps). It limits
the total available bandwidth.
System operators slice the network by assigning bandwidth to each type of user (a slice). 


Time slots are employed in the system (one second in each slot). At the start of each time slot,  the BS determines the amount of bandwidth $b_i$ will be allocated to $i$  ($i$ being the user type: Video, VoLTE, and URLLC) according to the number of active users in each slice.  By taking $t_i$ to represent the traffic demand for each slice, the throughput for each user type is $\min(b_i,t_i)$. CLARA aims to maximize the total throughput over time.


Each type of user is assigned a dissatisfaction ratio indicating how dissatisfied they are with their service. We set it as  $1-\frac{\min(b_i,t_i)}{t_i}$ in our simulation. The dissatisfaction ratio is 0 if the bandwidth assigned to the users exceeds or equals their demands. Otherwise, it will be $1-\frac{b_i}{t_i}$. The overall dissatisfaction ratio over time should under a constraint.
BS determines the latency $l_i$ of each user type using a queue maintained by the BS. Our system employs the first-come-first-serve principle. In queued traffic packets, those not processed during the current time slot are forwarded to the next period. Latency is the total time needed to transmit a traffic packet and it cannot be easily formulated mathematically. In each step, the latency should under a limitation.

In addition,  dynamic hidden structures exist regarding how users engage in the corresponding service depending on the service quality. In general, users depart and arrive at the network according to Poisson distributions with mean of $\lambda_i$ and $\mu_i$. The arrival rate $\lambda_i$  is adaptive with the satisfaction ratio from the last time slot. In each time slot, $\lambda_i$  is updated with $\lambda_i = 0.99*\lambda_i + 0.01*\lambda_i*\frac{b_i}{t_i}$. 
Furthermore, users may depart early if the service quality is unsatisfactory. Specifically, we assume that if the dissatisfaction ratio is  0, no one  departs early. Otherwise, users depart with probability $1-\frac{b_i}{t_i}$.
\textbf{During the slicing process, we assume that these traffic demands and mobility patterns are unknown. This gives learning-based approaches, which incorporate exploration, an advantage over traditional methods which rely only on observed states. }


\section{Problem Formulation}
\label{Sec:ProblemFormulation}
In this section, we present the problem formulation for CLARA. 
Resource allocation in the network slicing is an optimization problem in essence. 
From the system operator's view point, the objectives include total throughput, Quality of Service (QoS), Quality of Experience (QoE), churn rate, as well as operational revenue and cost, etc. 
At the same time, network slicing is subject to a number of constraints, including resource constraints (e.g., spectrum, power, computation, storage), user service performance constraints (e.g., latency, average data rate,  QoS, QoE), security requirements (e.g., isolation and  firewall) and service level agreement (SLA) requirements (e.g., service availability and reliability), etc.

\begin{table*}[t]
	\centering
	\begin{tabular}{|c c c c|} 
		\hline
		Distribution& Initial number of users&Inter-Arrival Time&Packet Size\\
		 \hline
		 \multirow{2}{*}{Video}&\multirow{2}{*}{Poisson [Mean=50]}&Pareto [Exponential Para = 1.2, &Truncated Pareto [Exponential Para = 1.2, \\
		 &&Mean= 6 ms, Max = 12.5 ms]& Mean= 100 Byte, Max = 250Byte]\\
		 \hline
		 VoLTE& Poisson [Mean=50]  &Uniform [Min = 0, Max =160ms] &Constant (40 Byte) \\
		 \hline
		 \multirow{2}{*}{URLLC}&\multirow{2}{*}{Poisson [Mean=10]}&Truncated Exponential&Truncated Lognormal [Mean = 2 MB, \\
		 &&[Mean = 180ms]&Standard Deviation = 0.722 MB, Maximum =5 MB]\\
		\hline
	\end{tabular}
	\caption{Parameter for different types of user}
	\label{table:paremeter}
\end{table*}

\begin{table*}[t]
	\centering
	\begin{tabular}{|c c|} 
		\hline
		CMDP& Radio Resource Slicing \\
		\hline
		State& Number of active users in each type: $(n_{Video},n_{VoLTE},n_{URLLC})$  \\
		 \hline
	    Action&Sliced bandwidth to each type of user: $(b_{Video},b_{VoLTE},b_{URLLC})$\\
		 \hline
	    Reward function&Total throughput: $\min(b_{Video},t_{Video})+\min(b_{VoLTE},t_{VoLTE})+\min(b_{URLLC},t_{URLLC})$ \\
		\hline
		Explicit instantaneous constraint& Sum of sliced bandwidth: $b_{Video}+b_{VoLTE}+b_{URLLC}$\\
		\hline
		Implicit instantaneous constraint& Average of latency\\
		\hline
		Cost function&Dissatisfaction ratio: $1-\frac{b_i}{t_i}$, (i =Video, VoLTE  and URLLC)\\
		\hline
	\end{tabular}
	\caption{Mapping from Radio Resource Slicing to CMDP}
	\label{table:mapping}
\end{table*}
\subsection{Constrained  Markov Decision Process (CMDP)} \label{sec:cmdp}


Mathematically, we formulate the problem as a CMDP, which 
is represented with the tuple $(S,A,R, C,P,\mu, \gamma)$. 
The network observations constitute the state set $S$, e.g. current network slice allocation, network load (e.g., the number of users and traffic demand),  network status (e.g. cell conditions, neighboring cell interference level), etc.

The resource allocation decisions constitute the action set $A$, which depending on what level CLARA runs, could be admission control, computation and spectrum allocation, as well as network configurations, etc. 
 
The reward of taking action $a$ under state $s$ is defined as the reward function $R:S \times A \times S \mapsto \mathbb{R}$, e.g., throughput and revenue, etc. The objective is to maximize the cumulative reward expectation.
Similarly, the costs of taking action $a$ under state $s$ is defined as the cost functions, $C_i:S \times A \times S \mapsto \mathbb{R}$. There are $m$ cost functions and each is under a constraint, e.g. service level agreement (SLA) and latency, etc.
 $P: S \times A \times S \mapsto [0,1]$ is the unknown transition probability function, where $P(s'|s,a)$ is the transition probability from state $s$ to $s'$ taking action $a$. 
 Last, $\mu: \mathit{S}\mapsto [0,1]$ is the initial state distribution and $\gamma$ is the discount factor, which can be different for reward and costs. 
 
 The actions are constrained by two types of constraints.
A cumulative constraint requires that the cumulatively sum of a cost is within a certain limit, e.g., outage probability or average throughput, etc, while an instantaneous constraint requires that a cost needs to satisfy a condition in each time slot, e.g. resource limits and service latency requirement, etc. 
Instantaneous constraints can be further divided into explicit and implicit instantaneous constraints.
An explicit constraint has a closed-form expression that can be numerically checked, e.g., transmission power and spectrum available, etc.
An implicit constraint does not have an accurate closed-form formulation due to the complexity of the system, e.g., latency and interference level, etc. 
Such constraints have to be modeled or learned using existing data and/or during exploration. 
We define action $a$ as feasible,
if $a\in A$ satisfies all the constraints including both cumulative and instantaneous. 

\subsection{Mapping Network Slicing to CMDP}
Mapping to  the radio access scenario, as summarized in Table~\ref{table:mapping}, the state $s=(n_{Video},n_{VoLTE},$
$n_{URLLC})$ is the  number of users observed at the beginning of each time slot. We do not know the exact traffic demand generated in this time slot.
The action $a=(b_{Video},b_{VoLTE},b_{URLLC})$ is the bandwidth allocation for each type of users.
The reward $R(s,a)$ at each time slot is the total throughput $\min(b_{Video},t_{Video})+\min(b_{VoLTE},t_{VoLTE})+\min(b_{URLLC},t_{URLLC})$. 

Moreover, each type of users has a cumulative constraint, which is  the expectation of cumulative dissatisfaction ratio. The corresponding cost function $C_i(s,a)$ in each step is the dissatisfaction ratio $1-\frac{\min(b_i,t_i)}{t_i}$ for each type of user.  

Last, the actions need to satisfy  both the explicit and implicit instantaneous constraints. The explicit instantaneous constraint is the sum of allocated bandwidth, $b_{Video}+b_{VoLTE}+b_{URLLC}$, which must be less or equal to the total bandwidth ($100$ Mbps). The implicit instantaneous constraint is on the average latency of each type of user, which we cannot get a closed-form solution and needs to be learned. The average latency should be less or equal to a predefined limitation.

CLARA learns a policy $\pi$ takes states as input and output actions. We denote a policy as $\pi_{\theta}$ to emphasize its dependence on the parameter $\theta$ (e.g., a neural network policy) and $\tau = (s_{0}, a_{0},s_{1}, a_{1}... )$ is a trajectory, where $\tau \sim \pi_{\theta}$.
The objective is to select a policy $\pi_{\theta}$, which maximizes the discounted cumulative reward
\begin{equation}\label{Def:Reward}
    J_{R}^{\pi_{\theta}}= \mathbb{E}_{\tau \sim \pi_{\theta}}[\sum_{t=0}^{\infty}\mathit{\gamma}^{t}\mathit{R}(s_{t},a_{t},s_{t+1})],
\end{equation}
while satisfying discounted cumulative constraints
\begin{equation}
    J_{C_{i}}^{\pi_{\theta}}=\mathbb{E}_{\tau \sim \pi_{\theta}}[\sum_{t=0}^{\infty}\gamma^{t}\mathit{C_i}(s_{t},a_{t},s_{t+1})]
\end{equation}
and instantaneous constraints.

Formally, the optimization problem is defined as  
\begin{align}
& \underset{\theta}{\text{maximize}} & &\max_{\theta} J_{R}^{\pi_{\theta}}\label{eq:objective}  \\
&\text{subject to}  & & J_{C_{i}}^{\pi_{\theta}} \leq \omega_i,  \text{for each i}, \label{eq:constraint}  \\
& & & C_{j}(s_t, a_t)\leq \epsilon_j, \text{for each j and t}, \label{eq:instconstraint} 
\end{align}

The formulation and following CLARA algorithm can be applied to general network slicing problems, including radio access, edge, cloud, and core networks. 
Here, we explain CLARA on a radio resource slicing scenario derived from the work~\cite{li2018deepReinforce} and it can be extended to more general cases easily.

\section{Preliminaries}
%
%
To solve the above problem, we briefly review the preliminaries from previous work in this section.
\subsection{Definition}
For a state-action trajectory starting from state $s$,  the value function of state $s$ is
\begin{equation*}
    V_{R}^{\pi_{\theta}}{(s)} = \mathbb{E}_{\tau \sim \pi_{\theta}}[\sum_{t=0}^{\infty}\gamma^{t}\mathit{R}(s_{t},a_{t},s_{t+1})|s_{0}=s]. 
\end{equation*}
The  action-value function of state $s$ and action $a$ is 
\begin{equation*}
    Q_{R}^{\pi_{\theta}}{(s,a)} = \mathbb{E}_{\tau \sim \pi_{\theta}}[\sum_{t=0}^{\infty}\gamma^{t}\mathit{R}(s_{t},a_{t},s_{t+1})|s_{0}=s, a_{0}=a],
\end{equation*}
and the advantage function is 
\begin{equation}\label{advantage}
    A_{R}^{\pi_{\theta}}{(s,a)} = Q_{R}^{\pi_{\theta}}{(s,a)} - V_{R}^{\pi_{\theta}}{(s)}. 
\end{equation}
In the CMDP, the corresponding values for the cumulative constraint cost functions are calculated in the same way.

$V_{C_i}^{\pi_{\theta}}{(s)}$, $Q_{C_i}^{\pi_{\theta}}{(s,a)}$, $A_{C_i}^{\pi_{\theta}}{(s,a)}$, 
for each constraint cost function $C_i$,  
are defined by 
replacing reward function $R$ above with $C_i$. 
In the following sections, to simplify the notation, we omit the subscripts of $R$ and $C_i$ if there is no ambiguity. 

Let $\rho_{\pi_\theta}(s)$ be the discounted visitation frequencies 
\begin{equation*}
    \rho_{\pi_\theta}(s) = \sum_{t=0}^\infty \gamma^t P(S_t=s),  
\end{equation*}
where the actions are chosen according to $\pi_\theta$. 

The Kullback-Leibler (KL) divergence~\cite{kullback1951information} of  distribution $\Gamma$ from $\Delta$ is defined as
\begin{equation*}
    D_{KL}(\Gamma, \Delta) = \sum_{x\in\chi} \Gamma(x)\log \left( \frac{\Gamma(x)}{\Delta(x)}\right). 
\end{equation*}
We denote $D_{KL}^{max}(\pi, \pi') = \max_{s} D_{KL}(\pi(\cdot|s),\pi'(\cdot|s)) $. 

\subsection{Policy Gradient Methods}
%

 Policy gradient~\cite{sutton2000policy} method is applied to find an optimal policy of an unconstrained Markov Decision Process (MDP) problem. It  calculates the gradient of the objective Eq. (\ref{Def:Reward}),  
 \begin{equation*}
     \bigtriangledown J^{\pi_{\theta}} = \mathbb{E}_{t}[\bigtriangledown_{\theta} log\pi_{\theta}(a_{t}|s_{t})A_{t}]
 \end{equation*}
 where $\pi_{\theta}$ is the current policy under parameter $\theta$ and $A_{t}$ is the advantage function Eq. (\ref{advantage}) at time step $t$. Thereafter, $\theta$ is updated as 
 \begin{equation*}
     \theta = \theta + \eta \bigtriangledown J^{\pi_{\theta}},
 \end{equation*}
 where $\eta$ is the learning rate. 
 
 Trust Region Policy Optimization (TRPO)~\cite{schulman2015trust} is proposed to achieve monotonic improvement of the new policy based on the results of the previous policy. 
 The objective is approximated with a surrogate function combined with the Kullback Leibler (KL) divergence shown as follows. 
 We denote a local approximation $L_{\pi_{\theta}}(\pi_{\theta'})$ for $J^{\pi_{\theta'}}$ with $\pi_{\theta}$ as 
 \begin{equation}
    L_{\pi_{\theta}}(\pi_{\theta'}) =  J^{\pi_{\theta}} + \sum_s \rho_{\pi_\theta}(s) \sum_a \pi_{\theta'} (a |s) A^{\pi_{\theta}}(s,a). 
 \end{equation}
 The objective of TRPO is to maximize 
 \begin{equation}\label{trpo}
     \centering
     \begin{split}
      L^{TRPO}(\theta) =  L_{\pi_{\theta_{o}}}(\pi_{\theta}) - \frac{4\varepsilon^{\pi_{\theta_{o}}}\gamma}{(1-\gamma)^2} D_{KL}^{max}(\pi_{\theta_{o}}, \pi_{\theta}),  
 \end{split}
 \end{equation}
 where $\pi_{\theta_{o}}$ is the old policy to improve,  $\varepsilon^{\pi_{\theta_{o}}}=\max_{s,a} |A^{\pi_{\theta_{o}}}(s,a)|$. 
 The reward improvement of the new policy obtained by solving the optimization is guaranteed by the following theorem. 
 \begin{theorem}\label{Thm:TRPO}~\cite{schulman2015trust}
 $$J^{\pi_{\theta}} \geq L_{\pi_{\theta_{o}}}(\pi_{\theta}) - \frac{4\varepsilon^{\pi_{\theta_{o}}}\gamma}{(1-\gamma)^2} D_{KL}^{\max}(\pi_{\theta_{o}}, \pi_{\theta}). $$ 

 \end{theorem}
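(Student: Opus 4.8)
The plan is to follow the conservative-policy-iteration argument of Kakade and Langford, adapted to the general policy class. The cornerstone is the \emph{performance difference identity}, which expresses the true return of the new policy exactly in terms of the old policy's advantage function evaluated under the \emph{new} visitation frequencies:
$$J^{\pi_{\theta}} = J^{\pi_{\theta_o}} + \sum_s \rho_{\pi_\theta}(s) \sum_a \pi_\theta(a\mid s)\, A^{\pi_{\theta_o}}(s,a).$$
I would prove this by writing $A^{\pi_{\theta_o}}(s_t,a_t) = R(s_t,a_t,s_{t+1}) + \gamma V^{\pi_{\theta_o}}(s_{t+1}) - V^{\pi_{\theta_o}}(s_t)$, taking the expectation of $\sum_t \gamma^t A^{\pi_{\theta_o}}(s_t,a_t)$ along a trajectory $\tau \sim \pi_\theta$, and telescoping the value terms: the sum $\sum_t [\gamma^{t+1} V^{\pi_{\theta_o}}(s_{t+1}) - \gamma^t V^{\pi_{\theta_o}}(s_t)]$ collapses to $-V^{\pi_{\theta_o}}(s_0)$, whose expectation is $-J^{\pi_{\theta_o}}$, while the remaining reward terms assemble into $J^{\pi_\theta}$.

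Comparing this identity with the definition of the surrogate $L_{\pi_{\theta_o}}(\pi_\theta)$ given in the text, the two objects differ only in that $L$ uses the \emph{old} visitation frequencies $\rho_{\pi_{\theta_o}}$ in place of $\rho_{\pi_\theta}$. Writing $\bar A(s) = \sum_a \pi_\theta(a\mid s) A^{\pi_{\theta_o}}(s,a)$ for the expected advantage at state $s$ under the new policy, I obtain
$$J^{\pi_\theta} - L_{\pi_{\theta_o}}(\pi_\theta) = \sum_s \big(\rho_{\pi_\theta}(s) - \rho_{\pi_{\theta_o}}(s)\big)\,\bar A(s).$$
The entire error of the surrogate is therefore controlled by how far the state-visitation distributions drift apart when we switch from $\pi_{\theta_o}$ to $\pi_\theta$.

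The key step, and the one I expect to be the main obstacle, is bounding this drift by a coupling argument. Let $\alpha = D_{TV}^{\max}(\pi_{\theta_o},\pi_\theta) = \max_s D_{TV}\big(\pi_{\theta_o}(\cdot\mid s),\pi_\theta(\cdot\mid s)\big)$ denote the maximum total-variation distance. I would couple the two trajectory distributions so that at each visited state the sampled actions coincide with probability at least $1-\alpha$; running the chains jointly, the probability that they have \emph{ever} disagreed by time $t$ is at most $1-(1-\alpha)^t$, and whenever they have not disagreed the two chains occupy identical states, so $\bar A$ contributes nothing to the difference above. Each disagreement can shift $\bar A(s)$ by at most $2\varepsilon^{\pi_{\theta_o}}$ in magnitude, and a careful accounting of the discounted contribution of the disagreement events yields
$$\big|J^{\pi_\theta} - L_{\pi_{\theta_o}}(\pi_\theta)\big| \le \frac{4\,\varepsilon^{\pi_{\theta_o}}\,\gamma}{(1-\gamma)^2}\,\alpha^2.$$
The delicate bookkeeping lives in summing the geometric-in-$t$ disagreement probabilities against the $\gamma^t$ discounting to extract the $\gamma/(1-\gamma)^2$ factor together with the quadratic dependence on $\alpha$; this is where essentially all of the work resides.

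Finally, I would convert the total-variation bound into the KL form stated in the theorem. Pinsker's inequality, used here as $D_{TV}(p,q)^2 \le D_{KL}(p,q)$, gives $\alpha^2 \le D_{KL}^{\max}(\pi_{\theta_o},\pi_\theta)$. Substituting this into the previous display and retaining only the lower-bound direction yields
$$J^{\pi_\theta} \ge L_{\pi_{\theta_o}}(\pi_\theta) - \frac{4\,\varepsilon^{\pi_{\theta_o}}\,\gamma}{(1-\gamma)^2}\,D_{KL}^{\max}(\pi_{\theta_o},\pi_\theta),$$
which is precisely the claimed guarantee.
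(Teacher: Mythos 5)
Your proposal is correct: it reconstructs the standard TRPO argument --- the Kakade--Langford performance-difference identity, the coupling bound on the visitation-distribution drift giving the quadratic dependence on $\alpha = D_{TV}^{\max}(\pi_{\theta_o},\pi_\theta)$, and Pinsker's inequality to pass from total variation to KL. The paper itself offers no proof of this theorem (it is imported as a background result from \cite{schulman2015trust}), and your argument is essentially the same as the one in that cited source.
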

 
 The following theorem provides a tighter bound than Theorem~\ref{Thm:TRPO} to measure the difference between any two policies.  
 \begin{theorem}\label{thm:CPO}~\cite{achiam2017constrained} 
 For any policies $\pi', \pi$, the following bound holds:
 \begin{align*}
     &J(\pi') - J(\pi) \geq \\
      & \mathop{\mathop{\mathbb{E}}_{s\sim\rho_{\pi_{\theta}}}}_{ a\sim \pi_{\theta'}} \left[ A^\pi(s,a)\right]
     -\frac{\gamma\varepsilon^{\pi'}}{(1-\gamma)^2}\sqrt{2D_{KL}^{\max}(\pi', \pi)},
 \end{align*}
 where
 $\varepsilon^{\pi'}=\max_{s} \left |\mathbb{E}_{a\sim \pi'} [A^{\pi}(s,a)] \right|.$
 
 \end{theorem}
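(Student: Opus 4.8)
The plan is to reduce the statement to the exact \emph{performance difference lemma} and then control the one source of slack, namely the mismatch between the new policy's discounted visitation frequencies $\rho_{\pi'}$ and the old policy's $\rho_{\pi}=\rho_{\pi_\theta}$ that already appears in the claimed first term. First I would recall (and, if desired, re-derive by telescoping advantages along a trajectory, as in the proof of Theorem~\ref{Thm:TRPO}) the identity
\begin{equation*}
  J(\pi') - J(\pi) = \sum_s \rho_{\pi'}(s)\,\bar{A}(s), \qquad \bar{A}(s) := \mathbb{E}_{a\sim\pi'}\!\left[A^{\pi}(s,a)\right],
\end{equation*}
which is exact but weights $\bar{A}$ by $\rho_{\pi'}$ rather than by the $\rho_{\pi_\theta}$ we can sample from the old policy. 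Note that $\sum_s \rho_{\pi_\theta}(s)\bar{A}(s)$ is precisely $\mathbb{E}_{s\sim\rho_{\pi_\theta},\,a\sim\pi'}[A^{\pi}(s,a)]$, the first term of the bound, and equals $L_{\pi}(\pi')-J(\pi)$ in the notation of the surrogate.

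Next I would add and subtract this surrogate term and apply H\"older's inequality. Writing
\begin{equation*}
  J(\pi') - J(\pi) - \sum_s \rho_{\pi_\theta}(s)\,\bar{A}(s) = \sum_s \big(\rho_{\pi'}(s)-\rho_{\pi_\theta}(s)\big)\,\bar{A}(s),
\end{equation*}
I would bound the right-hand side below by
\begin{equation*}
  \sum_s \big(\rho_{\pi'}(s)-\rho_{\pi_\theta}(s)\big)\,\bar{A}(s) \ge -\,\|\rho_{\pi'}-\rho_{\pi_\theta}\|_1\,\max_s|\bar{A}(s)| = -\,\varepsilon^{\pi'}\,\|\rho_{\pi'}-\rho_{\pi_\theta}\|_1,
\end{equation*}
where the final equality is exactly the definition of $\varepsilon^{\pi'}$. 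It therefore remains only to show $\|\rho_{\pi'}-\rho_{\pi_\theta}\|_1 \le \tfrac{2\gamma}{(1-\gamma)^2}\max_s D_{TV}\big(\pi'(\cdot|s),\pi(\cdot|s)\big)$, where $D_{TV}$ denotes total variation distance.

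The main obstacle is this last bound on the visitation mismatch. I would view $\rho_{\pi}$ as the row vector $\mu^\top(I-\gamma P_{\pi})^{-1}$, where $P_{\pi}$ is the state-to-state transition matrix induced by $\pi$ (so likewise $\rho_{\pi'}=\mu^\top(I-\gamma P_{\pi'})^{-1}$), and use the resolvent identity
\begin{equation*}
  (I-\gamma P_{\pi'})^{-1}-(I-\gamma P_{\pi})^{-1} = \gamma\,(I-\gamma P_{\pi'})^{-1}(P_{\pi'}-P_{\pi})(I-\gamma P_{\pi})^{-1}
\end{equation*}
to get $\rho_{\pi'}-\rho_{\pi} = \gamma\,\rho_{\pi'}(P_{\pi'}-P_{\pi})(I-\gamma P_{\pi})^{-1}$. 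Taking $\ell_1$ norms and using that a row vector times a row-stochastic matrix does not increase $\ell_1$ mass, the resolvent $(I-\gamma P_{\pi})^{-1}$ contributes a factor $1/(1-\gamma)$ through its Neumann series, the total discounted mass $\|\rho_{\pi'}\|_1=1/(1-\gamma)$ contributes a second $1/(1-\gamma)$, and $\|P_{\pi'}-P_{\pi}\|$ collapses to the per-state policy gap $2\max_s D_{TV}(\pi'(\cdot|s),\pi(\cdot|s))$ (since each row of $P_{\pi'}-P_{\pi}$ is $\sum_a(\pi'(a|s)-\pi(a|s))P(\cdot|s,a)$); together with the explicit $\gamma$ this yields the $\tfrac{2\gamma}{(1-\gamma)^2}$ factor.

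Finally, Pinsker's inequality $D_{TV}(\pi'(\cdot|s),\pi(\cdot|s))\le\sqrt{\tfrac12 D_{KL}(\pi'(\cdot|s),\pi(\cdot|s))}$ converts the total-variation bound into $\max_s D_{TV}\le\sqrt{\tfrac12 D_{KL}^{\max}(\pi',\pi)}$, and the constant $2$ from the norm step combines with the $1/\sqrt2$ from Pinsker to give exactly $\sqrt{2D_{KL}^{\max}(\pi',\pi)}$, completing the lower bound. The delicate points I would watch are the normalization convention for $\rho$ (the unnormalized $\rho$ used here has total mass $1/(1-\gamma)$, which is where one of the $1/(1-\gamma)$ factors enters) and keeping the sign of the H\"older slack negative, so that the estimate yields a genuine lower bound on $J(\pi')-J(\pi)$ rather than a two-sided one.
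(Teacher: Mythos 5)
Your proof is correct, but note that the paper itself offers no proof of this statement: it is imported verbatim from the cited work of Achiam et al.\ (Constrained Policy Optimization), so there is no in-paper argument to compare against. What you have written is essentially a self-contained reconstruction of the proof in that cited source: the exact performance difference lemma $J(\pi')-J(\pi)=\sum_s\rho_{\pi'}(s)\,\mathbb{E}_{a\sim\pi'}[A^\pi(s,a)]$, the swap of $\rho_{\pi'}$ for $\rho_\pi$ paid for by H\"older, the visitation-shift bound $\|\rho_{\pi'}-\rho_\pi\|_1\le\frac{2\gamma}{(1-\gamma)^2}\max_s D_{TV}(\pi'(\cdot|s),\pi(\cdot|s))$ via the resolvent identity, and Pinsker to convert total variation into KL. I checked the constants: with the paper's unnormalized $\rho$ (total mass $1/(1-\gamma)$), the resolvent argument gives one factor of $1/(1-\gamma)$ from $\|\rho_{\pi'}\|_1$, one from the Neumann series of $(I-\gamma P_\pi)^{-1}$, a factor $2\max_s D_{TV}$ from the rows of $P_{\pi'}-P_\pi$, and the explicit $\gamma$; combining with Pinsker's $1/\sqrt{2}$ yields exactly $\frac{\gamma}{(1-\gamma)^2}\sqrt{2D_{KL}^{\max}(\pi',\pi)}$, matching the statement. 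The only stylistic difference from the original source is that Achiam et al.\ work with the normalized state distribution $d^\pi=(1-\gamma)\rho_\pi$ and phrase the intermediate bound through an expected (rather than maximal) TV divergence before relaxing to the max; your unnormalized-$\rho$ bookkeeping is equivalent and, if anything, aligns more directly with the notation this paper uses for $\rho_{\pi_\theta}$.
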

 
  Proximal Policy Optimization (PPO)~\cite{schulman2017proximal} is a heuristic algorithm with the same intuition as TRPO to formulate the problem with a first-order surrogate optimization to reduce the complexity, maximizing 
 \begin{equation}
   \begin{aligned} \label{PPO}
   &L^{CLIP}(\theta ) =  \\
   &~\mathop{\mathop{\mathbb{E}}_{s\sim\rho_{\pi_{\theta_{o}}}}}_{ a\sim \pi_{\theta_{o}}}[\min\left\{r(\theta),
 \operatorname{clip}(r(\theta), 1-\varepsilon, 1+\varepsilon)\right\}A^{\pi_{\theta_{o}}}(s,a)], 
 \end{aligned} 
 \end{equation}
 where $r(\theta) = \frac{\pi_{\theta}(a|s)}{\pi_{\theta_{o}}(a|s)}$, $\operatorname{clip}(\cdot)$ is the clip function and $r_{t}(\theta)$ is clipped between $\left [ 1-\epsilon, 1+\epsilon  \right ]$.

\section{Constrained Reinforcement Learning}

\subsection{Cumulative Constraints}

To deal with the cumulative constraints in the CMDP, we develop our method built upon our previous work Interior-point Policy Optimization
(IPO)~\cite{liu2020ipo}. IPO 
 augments the objective function of PPO (Eq. (\ref{PPO})) with logarithmic barrier functions without theoretical guarantee for the policy improvement. However, in this work, 
 we incorporate the TRPO objective with logarithmic barrier functions and provide  theoretical policy improvement guarantees. 
 Moreover, IPO uses a fixed hyperparameter $t$ for the logarithmic barrier function, while we further propose practical algorithms in an adaptive manner. 

The way that we deal with constraints is inspired by the interior-point method~\cite{boyd2004convex}. 
For completeness, we briefly review the related background. 
Intuitively, the constrained optimization problem is reduced to an unconstrained one by adding indicator functions  as
penalty 
such that 1) if a constraint is satisfied, zero penalty is added to the objective function, and 2) if the constraint is violated, the penalty is negative infinity, ideally. 

A logarithmic barrier function is a differentiable approximation of the indicator function, defined as 
\begin{equation*}\label{log}
    \phi {(\widehat{J}_{C_{i}}^{\pi_{\theta}})} = \frac{\log(-\widehat{J}_{C_{i}}^{\pi_{\theta}})}{t}, 
\end{equation*}
where $t$ is a hyperparameter, 
and 
 $\widehat{J}_{C_{i}}^{\pi_{\theta}} = J_{C_{i}}^{\pi_{\theta}} - \omega_{i}$, to simplify the notation. 
We get a better approximation for the indicator function with a higher $t$.

Now we take $L^{TRPO}(\theta)$ in Eq. (\ref{trpo}) as our objective with cumulative constraints, that is, 
\begin{equation}\label{PPO_constraints}
    \begin{split}
    \max_{\theta}~L^{TRPO}(\theta ), \\
    s.t.~~~\widehat{J}_{C_{i}}^{\pi_{\theta}} \leq 0 .
\end{split}
\end{equation}
We reduce the above optimization problem to an unconstrained optimization by augmenting the objective with the logarithmic barrier functions for constraints. Our objective becomes 
\begin{equation} \label{TRPO-IPO}
    \max_{\theta}~L^{TRPO}(\theta )+\sum_{i=1}^{m}  \phi {(\widehat{J}_{C_{i}}^{\pi_{\theta}})}.  
\end{equation} 
In the following subsection, we show the theoretical guarantee of the policy performance with this objective function. 
\subsection{Policy Performance Bounds} 
\label{Sec:policyperformance}
\begin{theorem}
\label{thm:IPO}
  The maximum gap between the optimal value of Eq. (\ref{PPO_constraints})  and the optimal of Eq. 
  (\ref{TRPO-IPO}) is bounded by $\frac{m}{t}$, where $m$ is the number of constraints and $t$ is the hyperparameter of logarithmic barrier function.
\end{theorem}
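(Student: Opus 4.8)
The plan is to reproduce the classical interior-point duality-gap argument (cf. the interior-point method in convex optimization). Throughout, let $\theta^\star$ denote an optimal solution of the barrier problem~(\ref{TRPO-IPO}), let $p^\star$ be the optimal value of the constrained problem~(\ref{PPO_constraints}), and read the claimed gap as $p^\star - L^{TRPO}(\theta^\star)$, i.e. the suboptimality of the barrier solution measured on the true objective. First I would establish the easy half: since $\phi(\widehat{J}_{C_i}^{\pi_\theta}) = \tfrac{1}{t}\log(-\widehat{J}_{C_i}^{\pi_\theta})$ diverges to $-\infty$ as any margin $\widehat{J}_{C_i}^{\pi_\theta}\to 0^-$, the maximizer $\theta^\star$ of~(\ref{TRPO-IPO}) must lie strictly inside the feasible region, so $\widehat{J}_{C_i}^{\pi_{\theta^\star}} < 0$ for every $i$. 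Hence $\theta^\star$ is feasible for~(\ref{PPO_constraints}) and $L^{TRPO}(\theta^\star) \le p^\star$, giving nonnegativity of the gap.

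Next I would write the first-order optimality condition for~(\ref{TRPO-IPO}). Using $\nabla_\theta \phi(\widehat{J}_{C_i}^{\pi_\theta}) = \tfrac{1}{t\,\widehat{J}_{C_i}^{\pi_\theta}}\nabla_\theta \widehat{J}_{C_i}^{\pi_\theta}$, stationarity at $\theta^\star$ reads
\begin{equation*}
\nabla_\theta L^{TRPO}(\theta^\star) + \sum_{i=1}^m \frac{1}{t\,\widehat{J}_{C_i}^{\pi_{\theta^\star}}}\nabla_\theta \widehat{J}_{C_i}^{\pi_{\theta^\star}} = 0.
\end{equation*}
The key idea is to read off candidate multipliers $\lambda_i^\star := -\tfrac{1}{t\,\widehat{J}_{C_i}^{\pi_{\theta^\star}}}$, which are strictly positive because $\widehat{J}_{C_i}^{\pi_{\theta^\star}}<0$. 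With these the condition becomes exactly stationarity of the Lagrangian $\mathcal{L}(\theta,\lambda^\star) = L^{TRPO}(\theta) - \sum_i \lambda_i^\star \widehat{J}_{C_i}^{\pi_\theta}$ at $\theta^\star$, and the crucial algebraic identity $\lambda_i^\star\,\widehat{J}_{C_i}^{\pi_{\theta^\star}} = -\tfrac{1}{t}$ holds termwise, so $\sum_i \lambda_i^\star \widehat{J}_{C_i}^{\pi_{\theta^\star}} = -m/t$.

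I would then close the bound via weak duality. Evaluating the Lagrangian at $\theta^\star$ gives $\mathcal{L}(\theta^\star,\lambda^\star) = L^{TRPO}(\theta^\star) + m/t$. Since $\lambda^\star \ge 0$, the dual value $q(\lambda^\star) = \max_\theta \mathcal{L}(\theta,\lambda^\star)$ is an upper bound on $p^\star$; and provided $\theta^\star$ attains this inner maximum, $q(\lambda^\star) = L^{TRPO}(\theta^\star) + m/t \ge p^\star$. Rearranging yields $p^\star - L^{TRPO}(\theta^\star) \le m/t$, which together with the first paragraph gives $0 \le p^\star - L^{TRPO}(\theta^\star) \le m/t$, as claimed.

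The hard part will be the step that promotes stationarity of $\mathcal{L}(\cdot,\lambda^\star)$ at $\theta^\star$ to \emph{global} maximization, which is what makes $q(\lambda^\star) = L^{TRPO}(\theta^\star)+m/t$ rather than merely $q(\lambda^\star)\ge L^{TRPO}(\theta^\star)+m/t$. This is immediate when $L^{TRPO}$ is concave and the $\widehat{J}_{C_i}^{\pi_\theta}$ are convex in $\theta$, but for neural-network policies these functions are nonconvex; I would therefore either state the result within the standard interior-point framework (treating the stationary point as in convex analysis) or make the convexity assumption explicit. A secondary, routine point is to justify differentiability and the existence of a strictly feasible interior maximizer of~(\ref{TRPO-IPO}) (a Slater-type condition), which is what licenses writing the first-order condition to begin with.
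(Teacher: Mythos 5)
Your proof is correct and follows essentially the same route as the paper's: the paper does not spell the argument out but defers to Theorem 1 of the IPO paper~\cite{liu2020ipo}, whose proof is exactly the classical interior-point duality-gap computation you reproduce (strict feasibility of the barrier maximizer, multipliers $\lambda_i^\star = -1/(t\,\widehat{J}_{C_i}^{\pi_{\theta^\star}})$ satisfying $\lambda_i^\star\,\widehat{J}_{C_i}^{\pi_{\theta^\star}} = -1/t$, and weak duality yielding the gap $m/t$). Your closing caveat about nonconvexity --- that promoting stationarity of the Lagrangian to global maximization requires concavity of the objective and convexity of the constraints, which neural-network policies do not satisfy --- is a gap the cited proof leaves implicit as well, so flagging it makes your version, if anything, more careful than the original.
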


This theorem provides a lower bound for the performance guarantee for incorporating the logarithmic barrier function into the TRPO objective. 
The proof is similar as the proof of the guarantee for the PPO objective in Theorem 1 of~\cite{liu2020ipo}.  
The theorem shed lights upon the effects from the hyperparameter $t$, since a larger $t$ provides a better approximation of the original objective, which is also validated in the empirical experiments in  IPO~\cite{liu2020ipo}. 
In practice, a larger $t$ can result in a large fluctuation near the boundary. Hence, there is a tradeoff in choosing this hyperparameter $t$.

\begin{theorem}
\label{thm:PolicyImprove}
Assume $\pi_{\theta}$ is obtained by solving Eq. (\ref{TRPO-IPO}) based on a previous feasible policy $\pi_{\theta_{o}}$. The following bounds holds 
 \begin{tiny}
\begin{align*}
       & J_R^{\pi_{\theta}} - J_R^{\pi_{\theta_{o}}} \geq \\
        &  -\sum_{i\in \Phi} \frac{1}{t} \log\left( 1 - \mathop{\mathop{\mathbb{E}}_{s\sim\rho_{\pi_{\theta_{o}}}}}_{ a\sim \pi_{\theta}} \left[\frac{1}{\psi^{\pi_{\theta_o}}} A_{C_i}^{\pi_{\theta_{o}}}(s,a)\right]
     +\frac{\sqrt{2}\gamma\varepsilon^{\pi_\theta}_{C_i}}{{\psi^{\pi_{\theta_o}}}(1-\gamma)^2}\sqrt{D_{KL}^{\max}(\pi_{\theta}, \pi_{\theta_{o}})}\right ),
\end{align*}
\end{tiny}

where
$\Phi=\{i:J_{C_i}(\pi_\theta)<J_{C_i}(\pi_{\theta_{o}})\}$, 
$$\varepsilon^{\pi_\theta}_{C_i}=\max_{s} \left |\mathbb{E}_{a\sim \pi_\theta} [A^{\pi_{\theta_{o}}}_{C_i}(s,a)] \right|,$$
 $$\psi^{\pi_{\theta_o}}=\min_i (\omega_i-J_{C_i}^{\pi_{\theta_o}}).  $$
 \end{theorem}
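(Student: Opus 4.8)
The plan is to exploit the optimality of $\pi_{\theta}$ in Eq.~(\ref{TRPO-IPO}) and then convert the resulting change in the barrier terms into the advantage-based expression using the trust-region bounds already established. First I would note that, since $\pi_{\theta}$ maximizes $L^{TRPO}(\theta)+\sum_i \phi(\widehat{J}_{C_i}^{\pi_{\theta}})$ and the previous feasible policy $\pi_{\theta_{o}}$ is an admissible competitor, the objective at $\pi_{\theta}$ is at least its value at $\pi_{\theta_{o}}$. Evaluating $L^{TRPO}$ at $\theta_{o}$ collapses to $J_R^{\pi_{\theta_{o}}}$, because the surrogate advantage term $\sum_a \pi_{\theta_{o}}(a|s)A^{\pi_{\theta_{o}}}(s,a)$ vanishes and $D_{KL}^{\max}(\pi_{\theta_{o}},\pi_{\theta_{o}})=0$. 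Rearranging the optimality inequality and then invoking Theorem~\ref{Thm:TRPO} in the form $J_R^{\pi_{\theta}}\geq L^{TRPO}(\theta)$ yields
\begin{equation*}
 J_R^{\pi_{\theta}}-J_R^{\pi_{\theta_{o}}} \geq \sum_{i=1}^{m}\left[\phi(\widehat{J}_{C_i}^{\pi_{\theta_{o}}})-\phi(\widehat{J}_{C_i}^{\pi_{\theta}})\right],
\end{equation*}
so the reward improvement is governed entirely by how much the barrier terms change, and the reward advantage and KL terms drop out.

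Next I would expand each barrier difference. Using the definition of $\phi$ and feasibility of both policies ($\omega_i-J_{C_i}^{\pi}>0$, guaranteed by the interior-point nature of the barrier, which also keeps every logarithm well defined), each term equals $\tfrac1t\log\frac{\omega_i-J_{C_i}^{\pi_{\theta_{o}}}}{\omega_i-J_{C_i}^{\pi_{\theta}}}=-\tfrac1t\log\bigl(1-\frac{J_{C_i}^{\pi_{\theta}}-J_{C_i}^{\pi_{\theta_{o}}}}{\omega_i-J_{C_i}^{\pi_{\theta_{o}}}}\bigr)$. For indices $i\notin\Phi$ the cost does not decrease, so this term is nonnegative and may be dropped while preserving a valid lower bound; this is precisely what restricts the sum to $\Phi$. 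For $i\in\Phi$ I would substitute the constraint-function version of Theorem~\ref{thm:CPO}, which lower-bounds $J_{C_i}^{\pi_{\theta}}-J_{C_i}^{\pi_{\theta_{o}}}$ by $\mathbb{E}_{s\sim\rho_{\pi_{\theta_{o}}},a\sim\pi_{\theta}}[A_{C_i}^{\pi_{\theta_{o}}}(s,a)]-\frac{\gamma\varepsilon^{\pi_\theta}_{C_i}}{(1-\gamma)^2}\sqrt{2D_{KL}^{\max}(\pi_{\theta},\pi_{\theta_{o}})}$; since $-\log$ is decreasing, replacing the numerator by this lower bound can only decrease the logarithm's argument and hence preserves the inequality.

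The final step is to replace the per-index slack $\omega_i-J_{C_i}^{\pi_{\theta_{o}}}$ in the denominator by the uniform quantity $\psi^{\pi_{\theta_{o}}}=\min_i(\omega_i-J_{C_i}^{\pi_{\theta_{o}}})$, and this is the step I expect to be the main obstacle, because it is only legitimate when the entire bracketed numerator is nonnegative. I would establish nonnegativity by observing that for $i\in\Phi$ the true cost difference $J_{C_i}^{\pi_{\theta}}-J_{C_i}^{\pi_{\theta_{o}}}$ is strictly negative, and since the Theorem~\ref{thm:CPO} expression is a lower bound on that difference it is itself strictly negative; negating it produces a positive numerator, so enlarging $1/(\omega_i-J_{C_i}^{\pi_{\theta_{o}}})$ to $1/\psi^{\pi_{\theta_{o}}}\geq 1/(\omega_i-J_{C_i}^{\pi_{\theta_{o}}})$ increases the argument of the logarithm, and monotonicity of $-\tfrac1t\log$ once more delivers the stated bound. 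Throughout, the delicate part is tracking inequality directions through the decreasing logarithm and maintaining strict interior-point feasibility of both $\pi_{\theta}$ and $\pi_{\theta_{o}}$.
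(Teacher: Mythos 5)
Your proposal is correct and takes essentially the same route as the paper's proof: exploit the optimality of $\pi_{\theta}$ for Eq.~(\ref{TRPO-IPO}) together with Theorem~\ref{Thm:TRPO} to reduce $J_R^{\pi_{\theta}}-J_R^{\pi_{\theta_{o}}}$ to a difference of barrier terms, drop the nonnegative terms for $i\notin\Phi$, then apply Theorem~\ref{thm:CPO} to the cost functions and replace each slack $\omega_i-J_{C_i}^{\pi_{\theta_o}}$ by $\psi^{\pi_{\theta_o}}$ (where you make explicit the positivity argument that the paper leaves implicit, and you also have the correct sign inside the logarithm, where the paper's displayed intermediate step has an apparent sign typo). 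One wording slip in your middle step: substituting the Theorem~\ref{thm:CPO} lower bound for $J_{C_i}^{\pi_{\theta}}-J_{C_i}^{\pi_{\theta_{o}}}$ \emph{increases} the logarithm's argument $1-\bigl(J_{C_i}^{\pi_{\theta}}-J_{C_i}^{\pi_{\theta_{o}}}\bigr)/\bigl(\omega_i-J_{C_i}^{\pi_{\theta_{o}}}\bigr)$ rather than decreasing it, and it is precisely this increase, passed through the decreasing map $-\tfrac{1}{t}\log(\cdot)$, that preserves the chain of lower bounds --- the direction you do state correctly in your final step.
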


\begin{proof}
Since $\pi_\theta$ is obtained by maximizing the objective in Eq. (\ref{TRPO-IPO}) and $\pi_{\theta_{o}}$ is a feasible policy, we have
\begin{small}
\begin{align} 
    &J_R(\pi_{\theta_{o}}) + \frac{1}{t} \sum_i \log\left(\omega_i-J_{c_i}(\pi_{\theta_{o}}) \right)      \nonumber  \\
    =& L_{\pi_{\theta_{o}}}(\pi_{\theta_{o}}) - \frac{4\varepsilon^{\pi_{\theta_{o}}}\gamma}{(1-\gamma)^2} 
    D_{KL}^{max}(\pi_{\theta_{o}}, \pi_{\theta_{o}}) + \frac{1}{t} \sum_i \log\left(\omega_i-J_{c_i}(\pi_{\theta_o}) \right)      \nonumber   \\
    \leq &L_{\pi_{\theta_{o}}}(\pi_{\theta}) - \frac{4\varepsilon^{\pi_{\theta_{o}}}\gamma}{(1-\gamma)^2} 
    D_{KL}^{max}(\pi_{\theta_{o}}, \pi_{\theta}) + \frac{1}{t} \sum_i \log\left(\omega_i-J_{c_i}(\pi_\theta) \right) \nonumber \\
    \leq  &J_R(\pi_\theta) + \frac{1}{t} \sum_i \log\left(\omega_i-J_{c_i}(\pi_\theta) \right). \label{step:TRPO}
\end{align}
\end{small} 
Theorem \ref{Thm:TRPO} is applied in Step (\ref{step:TRPO}). 

\begin{align}
   & J_R^{\pi_{\theta}} - J_R^{\pi_{\theta_{o}}} \nonumber \\
   \geq & -\sum_i \frac{1}{t} \log \left( 1+\frac{ J_{C_i}^{\pi_{\theta}} - J_{C_i}^{\pi_{\theta_{o}}}}{\omega_i-J_{C_i}^{\pi_{\theta_{o}}}}  \right) \nonumber \\
   \geq & -\sum_{i\in \Phi} \frac{1}{t} \log \left( 1+\frac{ J_{C_i}^{\pi_{\theta}} - J_{C_i}^{\pi_{\theta_{o}}}}{\omega_i-J_{C_i}^{\pi_{\theta_{o}}}} \right)  \label{throwPos} \\
   \geq &  -\sum_{i\in \Phi} \frac{1}{t} \log\biggl( 1 - \mathop{\mathop{\mathbb{E}}_{s\sim\rho_{\pi_{\theta_{o}}}}}_{ a\sim \pi_{\theta}} \left[\frac{1}{\psi^{\pi_{\theta_o}}} A_{C_i}^{\pi_{\theta_{o}}}(s,a)\right] \nonumber\\
     &+\frac{\sqrt{2}\gamma\varepsilon^{\pi_\theta}_{C_i}}{{\psi^{\pi_{\theta_o}}}(1-\gamma)^2}\sqrt{D_{KL}^{\max}(\pi_{\theta}, \pi_{\theta_{o}})} \biggr), \label{step:CPO}
\end{align}
In Step (\ref{throwPos}), we neglect all the positive terms with  $J_{C_i}(\pi_\theta)\geq J_{C_i}(\pi_{\theta_{o}})$. 
Step (\ref{step:CPO}) is obtained by applying Theorem \ref{thm:CPO}. 
\end{proof}
In the above theorem, we provide new results for the lower bound of our policy improvement. 
The worst case performance guarantee is decided by the advantage functions of the previous policy and the KL divergence between it and the new policy.  A lower the KL divergence and higher advantage function values can lead to a better policy improvement. 

\subsection{Adaptive IPO}
Besides the theoretical improvements, we propose our adaptive IPO algorithm, improved over the original IPO in~\cite{liu2020ipo}. 
As demonstrated in Theorem \ref{thm:IPO}, there is a trade-off between approximation accuracy and algorithm performance depending on the hyperparameter $t$. 
In the original IPO, the hyperparameter $t$ is fixed. It is time-consuming to tune the hyperparameter and is not adaptive to different system dynamics. In our work, we adjust the hyperparameter in an adaptive manner: we start with a small $t$ to have  more stable policy updates, and gradually increase $t$ to achieve better policies on convergence.  

In practice, we face the challenge of 
complex computation
of TRPO in large-scale problems. 
The heuristic algorithm, PPO, achieves better performance than TRPO empirically, which is employed in IPO.  
In the implementation, we also replace the objective function of TRPO with that of PPO in Eq. (\ref{PPO}), making the objective as maximizing 
\begin{equation} \label{IPO}
    L^{IPO}(\theta)=L^{CLIP}_R(\theta )+\sum_{i=1}^{m}  \phi {(\widehat{J}_{C_{i}}^{\pi_{\theta}})},
\end{equation}
where we denote $L^{CLIP}_R(\theta )$ and $L^{CLIP}_{C_i}(\theta )$  to be the formulations applying Eq. (\ref{PPO}) to reward and cost functions. 
%

We present our adaptive IPO in Algorithm \ref{alg:IPO}. 
In Phase I, the cost functions are successively optimized to obtain a feasible policy.  
The algorithm is initialized with the objective of maximizing 
\begin{equation*}
    L^C(\theta) = -L_{C_1}^{CLIP}(\theta),
\end{equation*}
to decrease the cumulative cost $C_1$ until the constraint is satisfied. 
Thereafter, at the end of iteration $i$, we update the objective $L^C(\theta)$ for next iteration, 
$$L^{C}(\theta) = -L^{CLIP}_{C_{i+1}}(\theta) + \sum_{j=1}^{i} \phi {(\widehat{J}_{C_{j}}^{\pi_{\theta}})},$$ 
where 
we replace $-L_{C_i}^{CLIP}(\theta)$ with its logarithmic barrier function
$\phi {(\widehat{J}_{C_{i}}^{\pi_{\theta}})}$ to guarantee the constraint for $C_i$ is satisfied in the future updates, 
and add $-L_{C_{i+1}}^{CLIP}(\theta)$ for the next cost function $C_{i+1}$. 
The above process is repeated until obtaining a feasible policy for all the constraints. 

In Phase II, the algorithm is initialized with the feasible policy in Phase I. 
In light of the trade-off of $t$, 
we start with a moderate small $t$ and adaptively increase it with a factor $\mu>1$ when policy convergence criteria are satisfied. 
In each iteration, we update  the policy by maximizing the objective  $L^{IPO}(\theta)$ in Eq. (\ref{IPO}).  



%
%


\begin{algorithm}[t]
	\caption{The procedure of adaptive {IPO}} \label{alg:IPO}
	\begin{flushleft}
	\hspace*{0.02in}{\bf Input:} 
	Initialize with a random policy $\pi_{\theta_{0}}$. 
	Set the hyperparameter  $\varepsilon$ for PPO, 
	$t=t_0$, $\mu>1$ for logarithmic barrier function, 
	and iteration number k = 0 
	
	\hspace*{0.02in}{\bf Output:} 
	The policy parameter $\theta$
	\end{flushleft}
	{\bf Phase I:}
	
	\begin{algorithmic}[1]
		\STATE Initialize $L^{C}(\theta) = -L^{CLIP}_{C_1}(\theta)$
		\FOR {i=1,2,\dots, m}
		 
		\WHILE{$\widehat{J}_{C_{i}}^{\pi_{\theta}} > 0$}
		\STATE Sample N trajectories $\tau_{1}, ..., \tau_{N}$ including observations, actions and  costs with the current policy $\pi_{\theta_{k}}$ 
		\STATE Calculate advantages, constraint values, etc
		\STATE Update $\theta_{k+1}$ by maximizing $L^{C}(\theta)$ with stochastic gradient descent methods 
		\STATE Iteration k = k+1 
		\ENDWHILE
		\STATE Update the objective as maximizing $L^{C}(\theta) = -L^{CLIP}_{C_{i+1}}(\theta) + \sum_{j=1}^{i} \phi {(\widehat{J}_{C_{j}}^{\pi_{\theta}})}$
		\ENDFOR
	\end{algorithmic}
	
	{\bf Phase II:}
	
	\begin{algorithmic}[1]
		\FOR {iteration k}
		\STATE Sample N trajectories $\tau_{1}, ..., \tau_{N}$ including observations, actions, rewards and  costs with the current policy $\pi_{\theta_{k}}$ 
		\STATE Calculate advantages, constraint values, etc
		\STATE  Update $\theta_{k+1}$ by maximizing $L^{IPO}(\theta)$ with stochastic gradient descent methods 
		\IF {the policy converges}
		\STATE $t=\mu*t$ 
		\ENDIF
		\STATE Iteration k = k+1
		\ENDFOR 
		\RETURN the policy parameter $\theta=\theta_{k+1}$
	\end{algorithmic}
\end{algorithm}

\subsection{Instantaneous Constraints} 
We further extend the policy neural network with new layers to handle instantaneous constraints. Our policy network $\pi_\theta$ takes a state $s$ as input and outputs an action $a=\pi_\theta(s)$.
To satisfy instantaneous constraints, one way is to project the infeasible action generated by $\pi_\theta$ to the feasible space~\cite{dalal2018safe}. To achieve this, one can introduce another additional last layer to $\pi_\theta$, whose role is to solve
\begin{equation}
   \begin{aligned} \label{eq:implicit}
  & \min_{a} \frac{1}{2} \left \| a - \pi_\theta(s) \right \|^{2}  \\
  & s.t.~C_j(s_t,a_t) \leq \epsilon_j
 \end{aligned} 
 \end{equation}
where $C_j(s_t,a_t)$ is the instantaneous cost under new action, and $\epsilon_j$ is the predefined constraint. In other words, we project the action from the policy $\pi_\theta(s)$ to the $\ell_2$ nearest feasible action $a$ that satisfies the instantaneous constraint. The projection idea can apply to both explicit and implicit constraints.
One challenge for implicit constraints is that the function $ C_{j}(\cdot,\cdot)$ is unknown. To address the problem, we take advantage of another neural network to learn the value of $ C_{j}(s_t,a_t)$ simultaneously, as in~\cite{dalal2018safe}.


\textbf{Global Sum constraints} 
 are a special case of explicit instantaneous constraints that can be easily satisfied with a softmax projection layer.
Assume that the action is a vector with $k$ entries, denoted as $a=[a_1,a_2,...,a_k]$, and we deal with the specific explicit constraints that require $\sum_k a_k =1$. 
A softmax layer is added right after the output layer of policy network $\pi_\theta$ to project the arbitrary action $a$ to a feasible action $a'=[a'_1,a'_2,...,a'_k]$, where
\begin{equation}
    a'_i = \frac{e^{a_i}}{\sum_{j=1}^{k}e^{a_j}}. 
\end{equation}
In the radio access scenario~\ref{sec:scenario}, we handle our explicit constraints (bandwidth allocation) with the softmax projection layer and handle the implicit constraints (latency) with general $\ell_2$ projection layer.

\section{Experiments}
\label{sec:experiments}
In the experiments, we demonstrate that CLARA outperforms the baselines including traditional methods and RL-based methods; CLARA can handle multiple constraints simultaneously and 
CLARA is able to speed up converge by incorporating domain knowledge;
at last, we discuss the reason why CLARA works well 
    through our observations. 

\subsection{Settings}

We evaluate the performance of CLARA on the radio resource slicing scenario, as described in Section~\ref{sec:scenario},
and compare it with traditional benchmarks and RL-based methods.
The network scenario is numerically simulated.
As hidden structures and system dynamics exist, accurate models of the system cannot obtain in practice. However, we can apply traditional methods based on observed states, which result in the following baselines, as suggested in~\cite{li2018deepReinforce}:

\begin{itemize}
    \item One-third equal allocation: the total bandwidth is equally sliced into three pieces.
    \item User-number-based allocation: the total bandwidth is sliced weighted by the number of users of each type. 
    \item Packet-number-based allocation: the total bandwidth is sliced weighted by the number of packets of each type. 
    \item Traffic-demand-based allocation: the total bandwidth is sliced weighted by the current traffic demand. 
\end{itemize}
The RL-based methods include:
\begin{itemize}
    \item IPO: the original IPO~\cite{liu2020ipo} with fixed hyperparameter $t$.
    \item PPO: the state-of-the-art unconstrained RL algorithm, which usually outperforms DQN~\cite{li2018deepReinforce} and TRPO~\cite{schulman2015trust}.
    \item PPO+Safelayer: the PPO method with Safelayer.
\end{itemize}
 We also compare with only adaptive IPO without Safelayer, to demonstrate the performance of Safelayer.
In the experiment, all policy neural-networks consist of two fully connected layers, with 64 and 32 nodes, respectively.

\begin{figure*}[hbt!]
    \centering
    \begin{subfigure}[t]{0.272\textwidth}
         \centering
         \includegraphics[width=\textwidth]{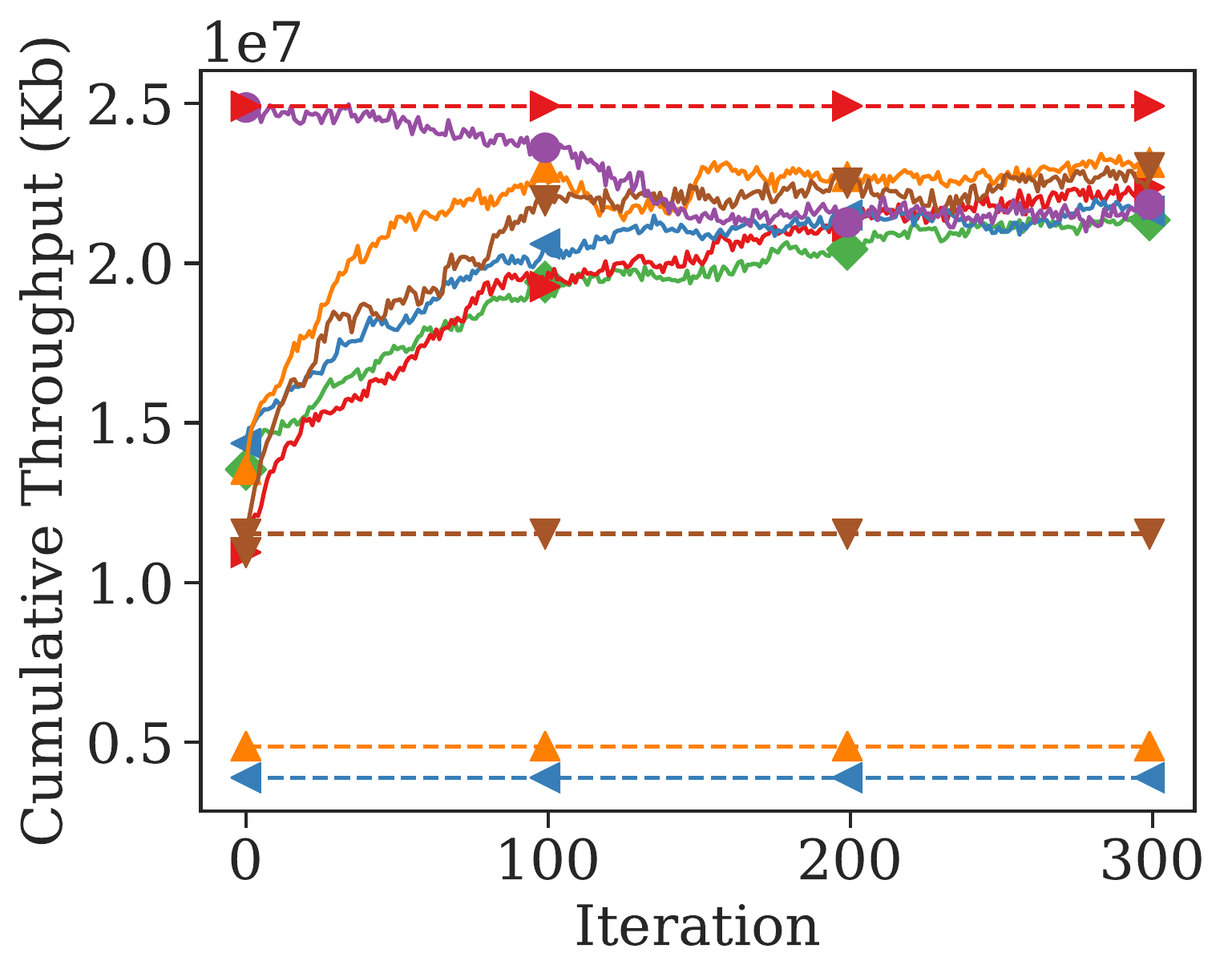}
         \caption{Reward under Video cumulative constraint}
         \label{fig:Video_reward}
     \end{subfigure}
     \hfill
     \begin{subfigure}[t]{0.272\textwidth}
         \centering
         \includegraphics[width=\textwidth]{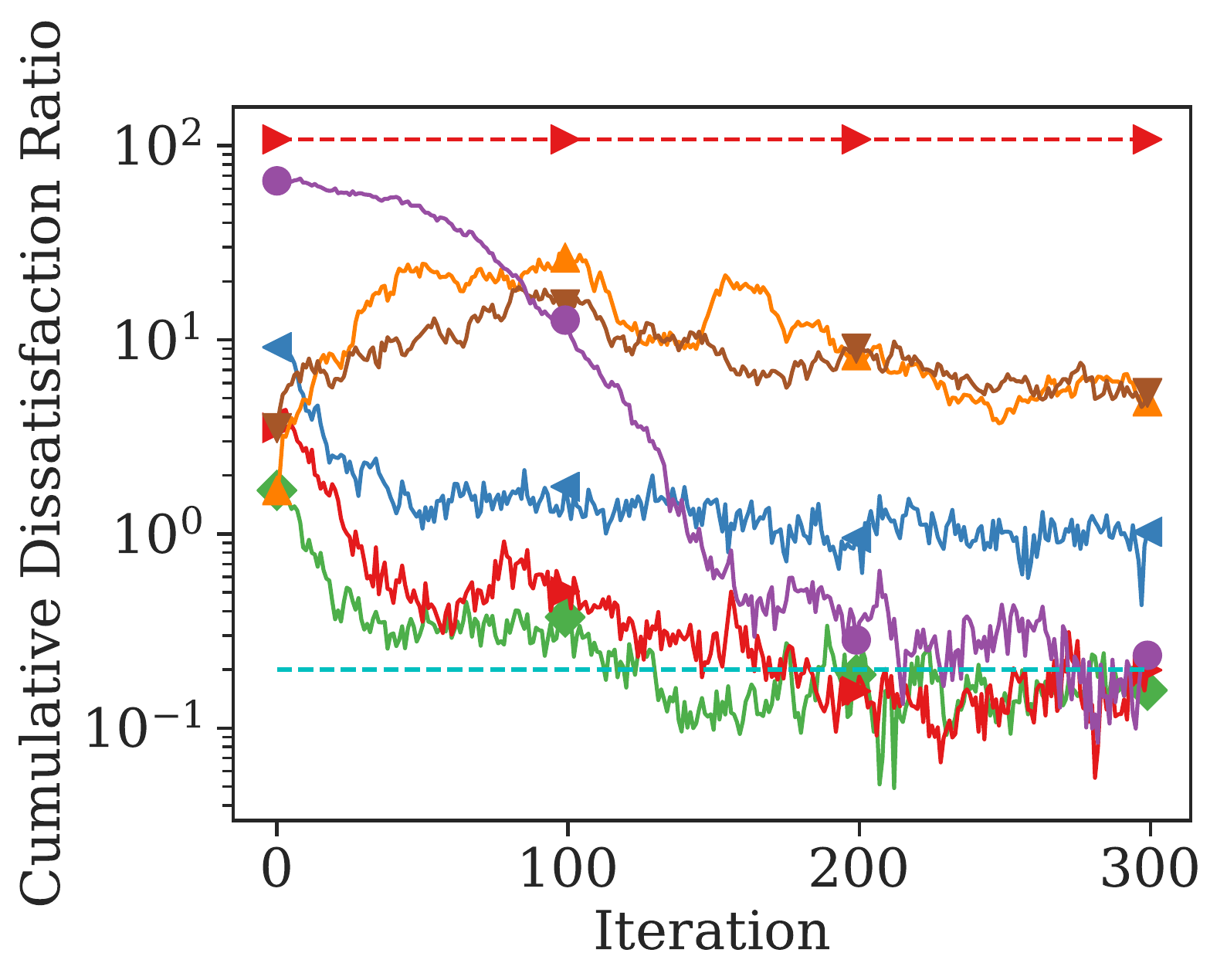}
         \caption{Video cumulative constraint (log-scaled y-axis)}
         \label{fig:Video_cumulative}
     \end{subfigure}
     \hfill
     \begin{subfigure}[t]{0.272\textwidth}
         \centering
         \includegraphics[width=\textwidth]{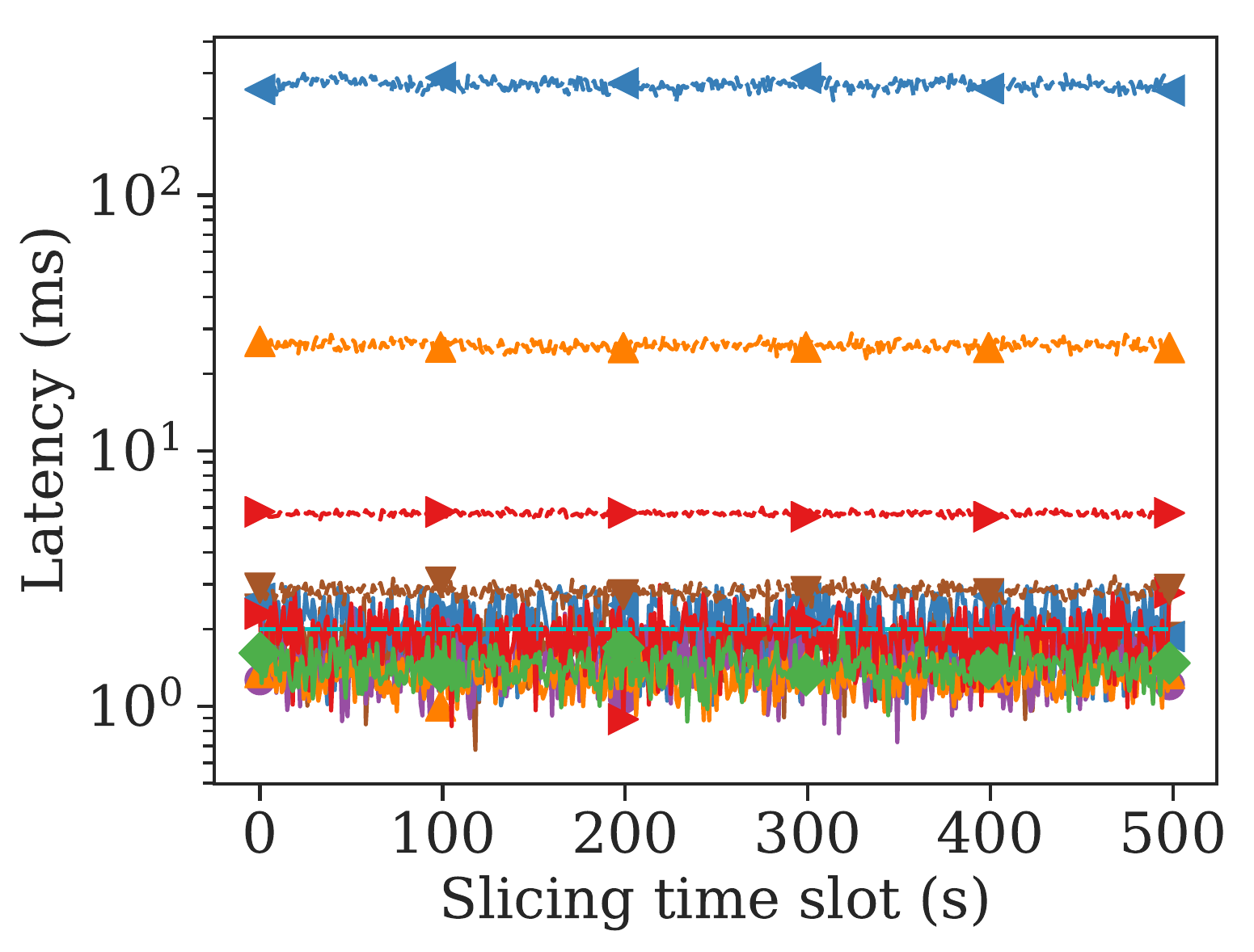}
         \caption{Instantaneous constraint under Video cumulative constraint (log-scaled y-axis)}
         \label{fig:Video_instantaneous}
     \end{subfigure}
     \hfill
     \begin{subfigure}[t]{0.272\textwidth}
         \centering
         \includegraphics[width=\textwidth]{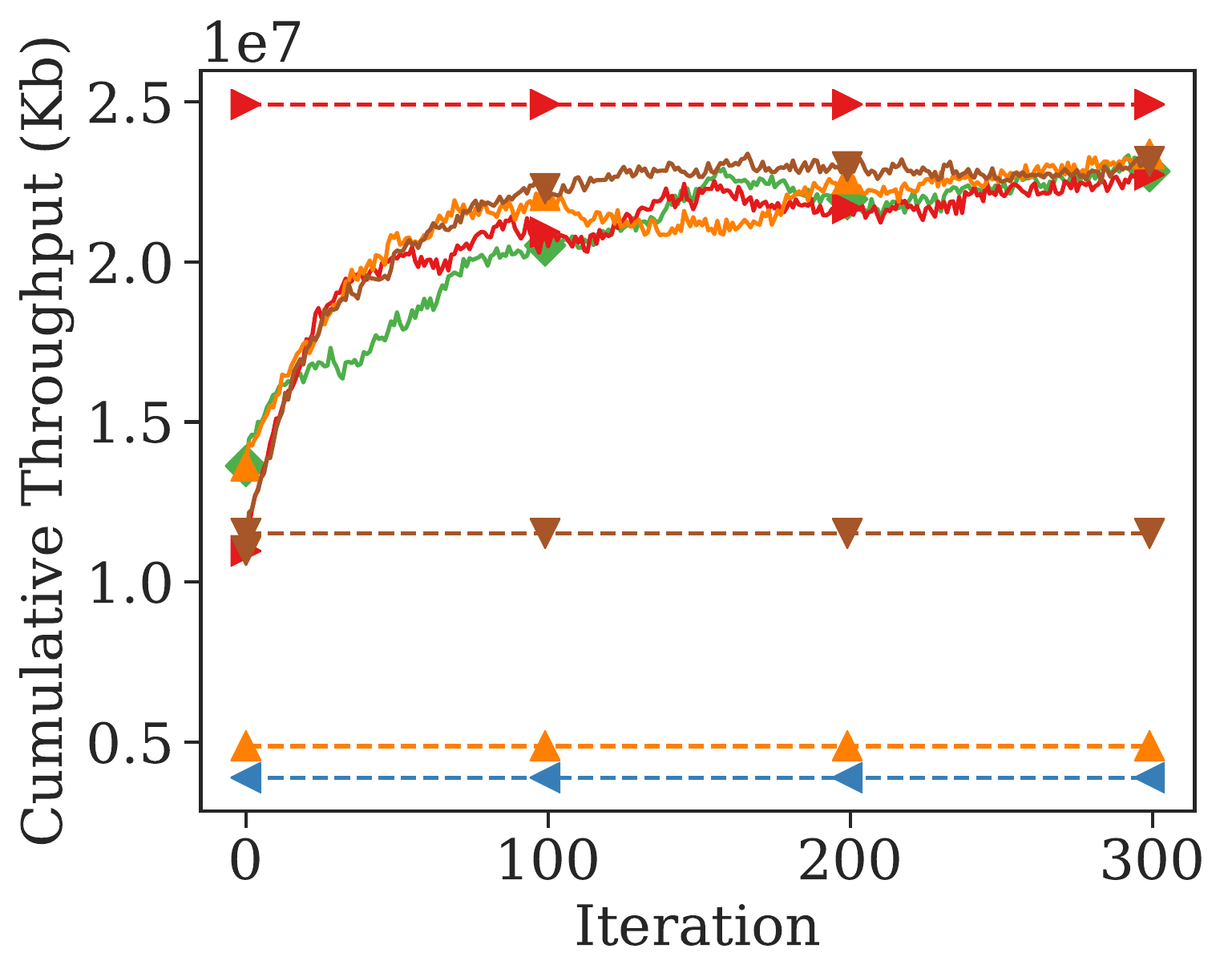}
         \caption{Reward under VoLTE cumulative constraint}
         \label{fig:VoLTE_reward}
     \end{subfigure}
     \hfill
     \begin{subfigure}[t]{0.272\textwidth}
         \centering
         \includegraphics[width=\textwidth]{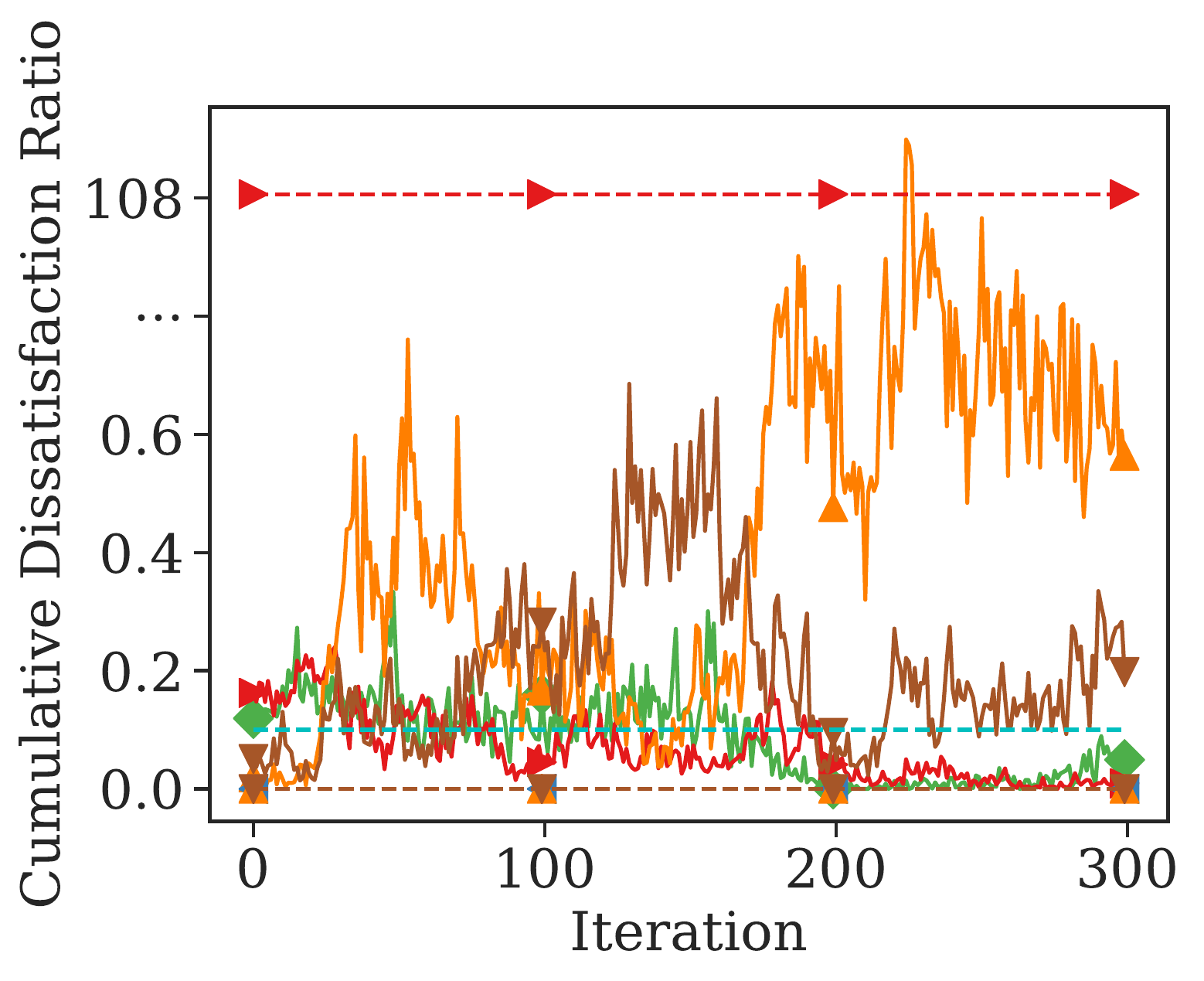}
         \caption{VoLTE cumulative constraint (nonlinear y-axis)}
         \label{fig:VoLTE_cumulative}
     \end{subfigure}
     \hfill
     \begin{subfigure}[t]{0.272\textwidth}
         \centering
         \includegraphics[width=\textwidth]{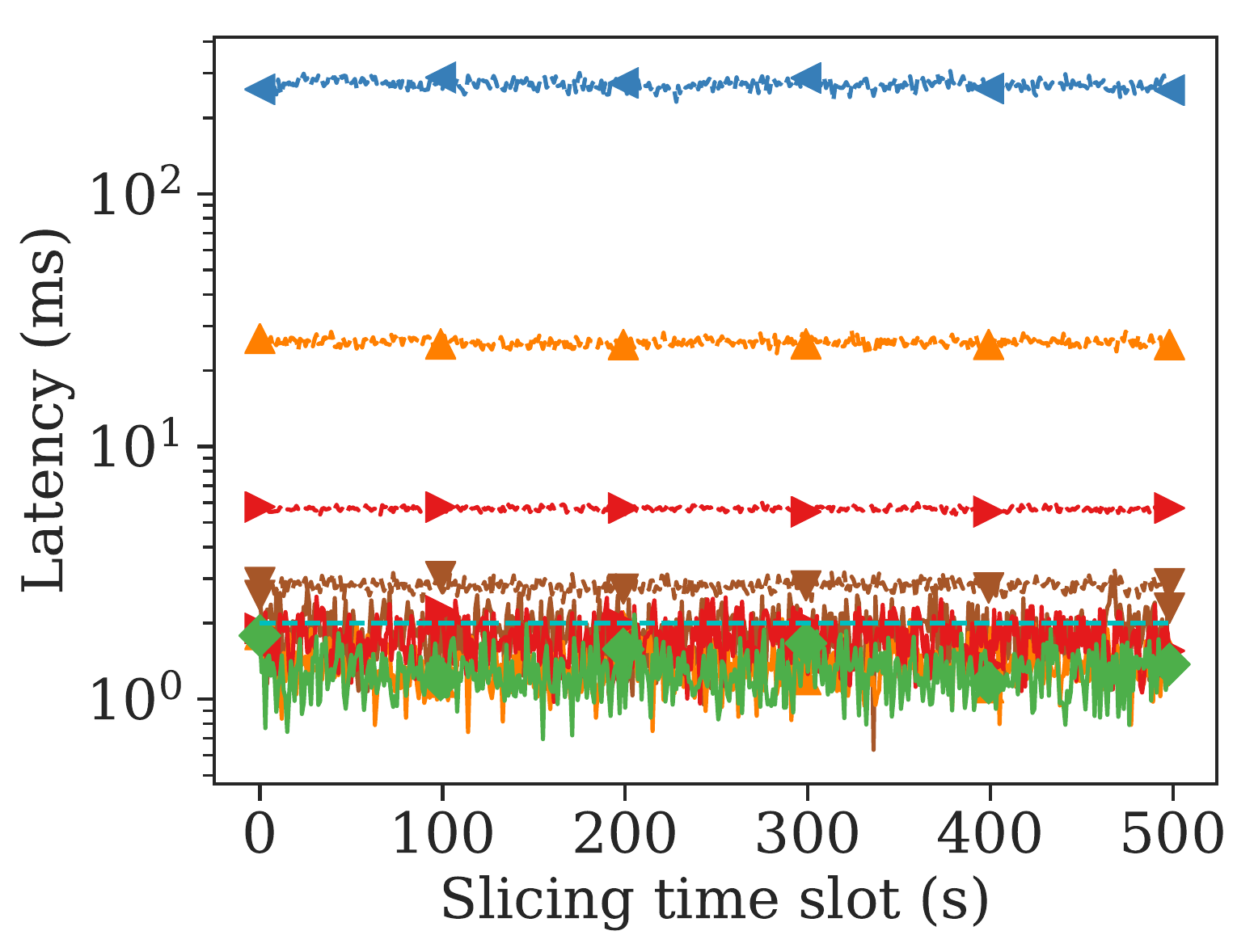}
         \caption{Instantaneous constraint under VoLTE cumulative constraint (log-scaled y-axis)}
         \label{fig:VoLTE_instantaneous}
     \end{subfigure}
     \hfill
     \begin{subfigure}[t]{0.272\textwidth}
         \centering
         \includegraphics[width=\textwidth]{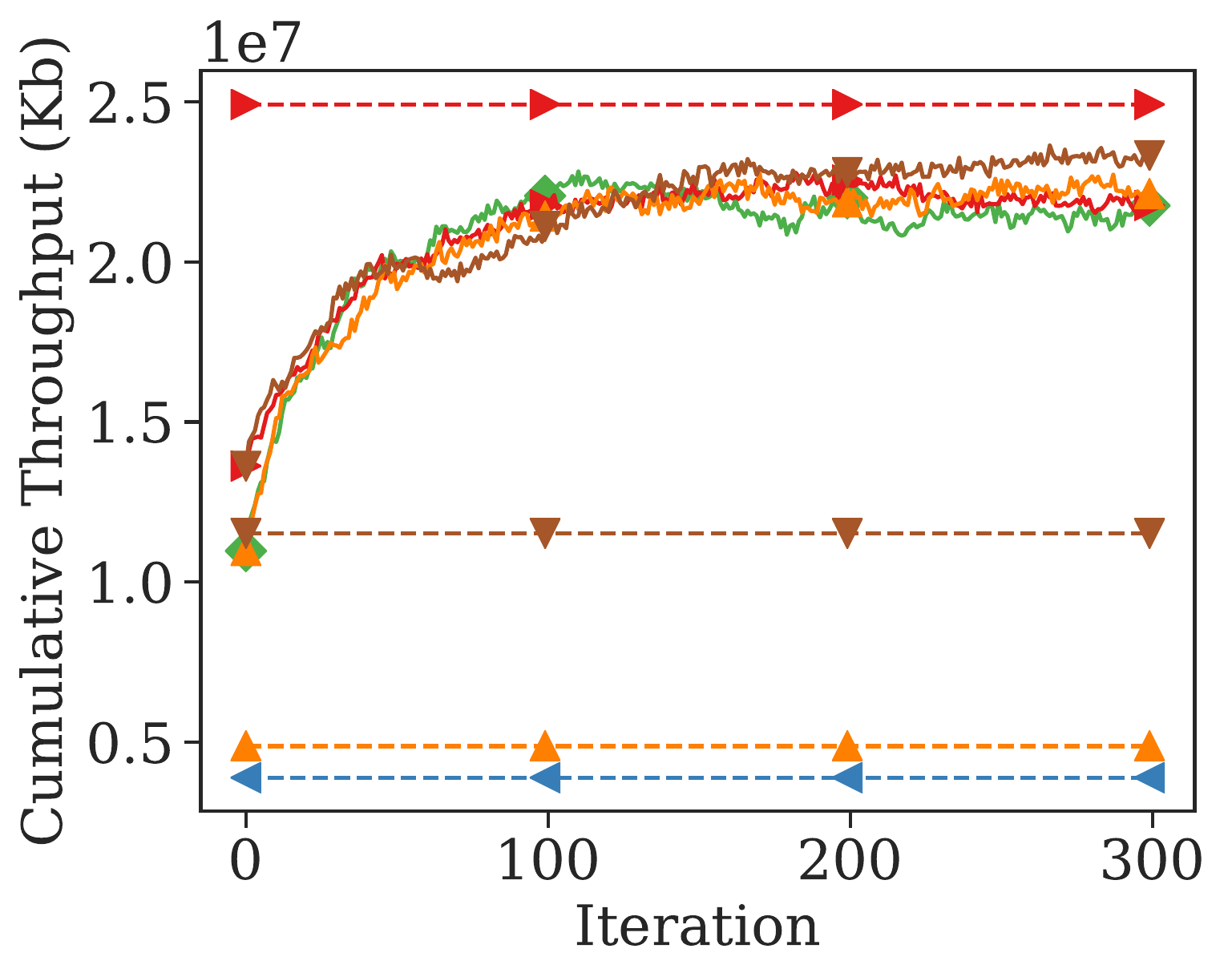}
         \caption{Reward under URLLC cumulative constraint}
         \label{fig:URLLC_reward}
     \end{subfigure}
     \hfill
     \begin{subfigure}[t]{0.272\textwidth}
         \centering
         \includegraphics[width=\textwidth]{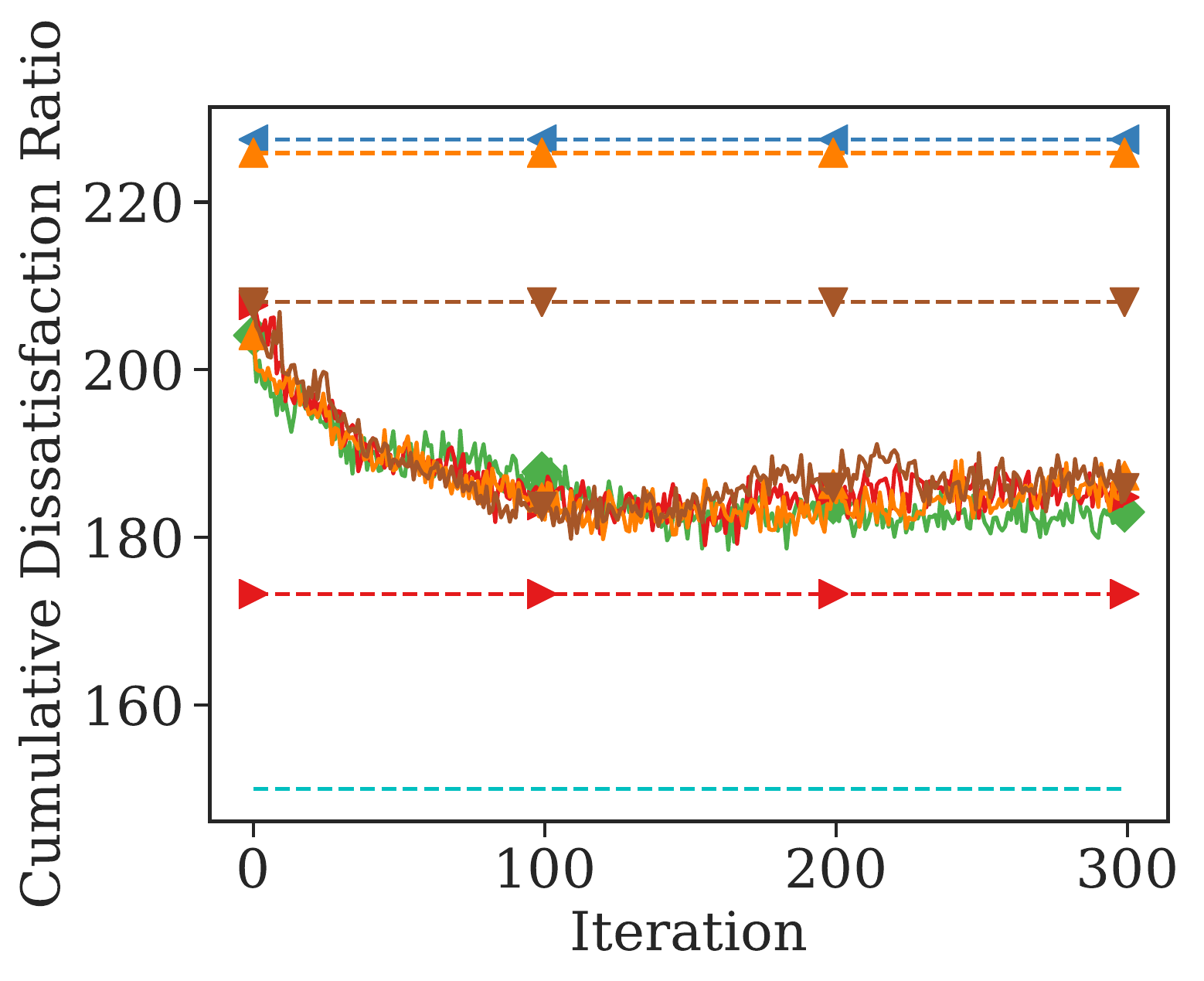}
         \caption{URLLC cumulative constraint}
         \label{fig:URLLC_cumulative}
     \end{subfigure}
      \hfill
     \begin{subfigure}[t]{0.272\textwidth}
         \centering
         \includegraphics[width=\textwidth]{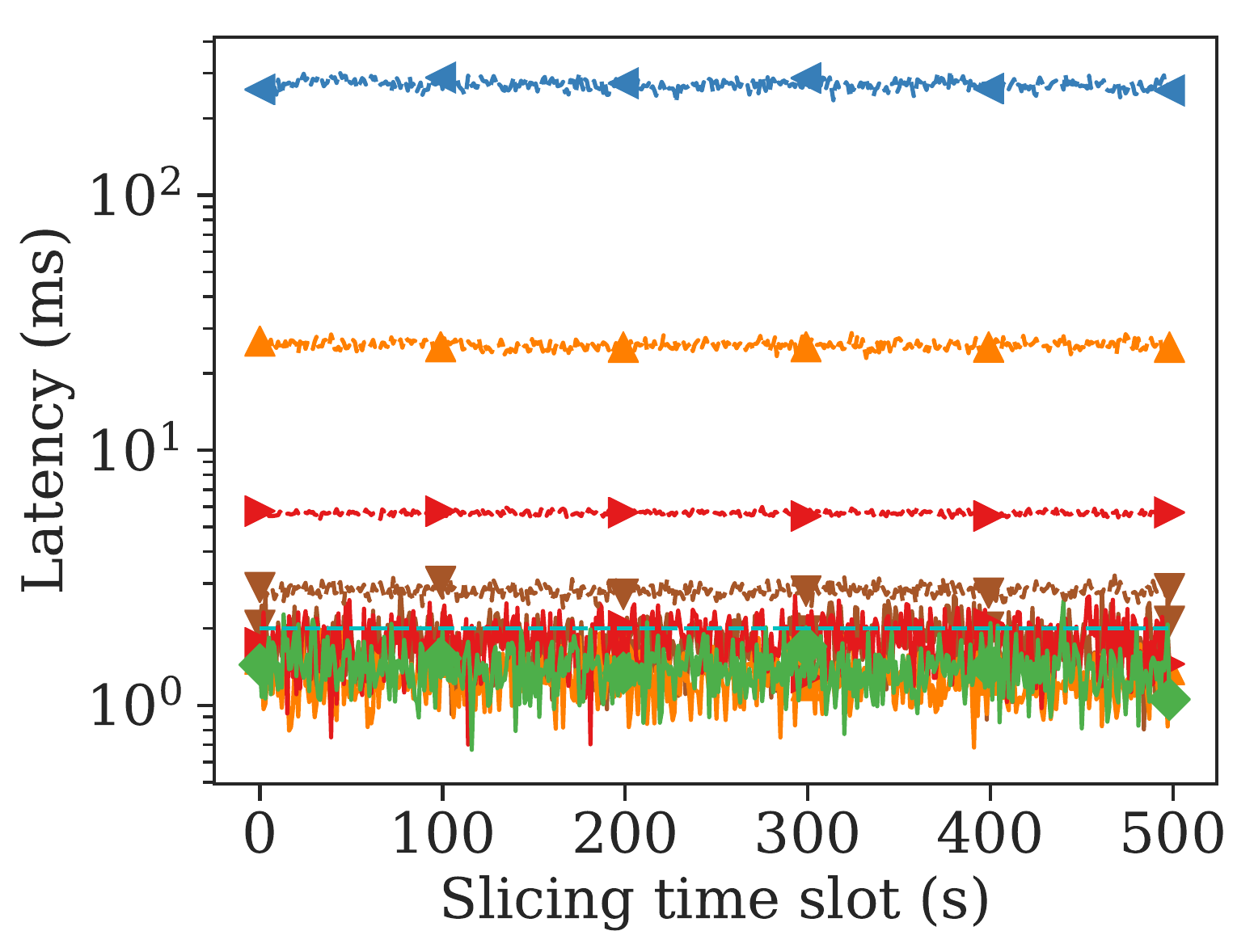}
         \caption{Instantaneous constraint under URLLC cumulative constraint (log-scaled y-axis)}
         \label{fig:URLLC_instantaneous}
     \end{subfigure}
     \hfill
     \begin{subfigure}[t]{0.272\textwidth}
         \centering
         \includegraphics[width=\textwidth]{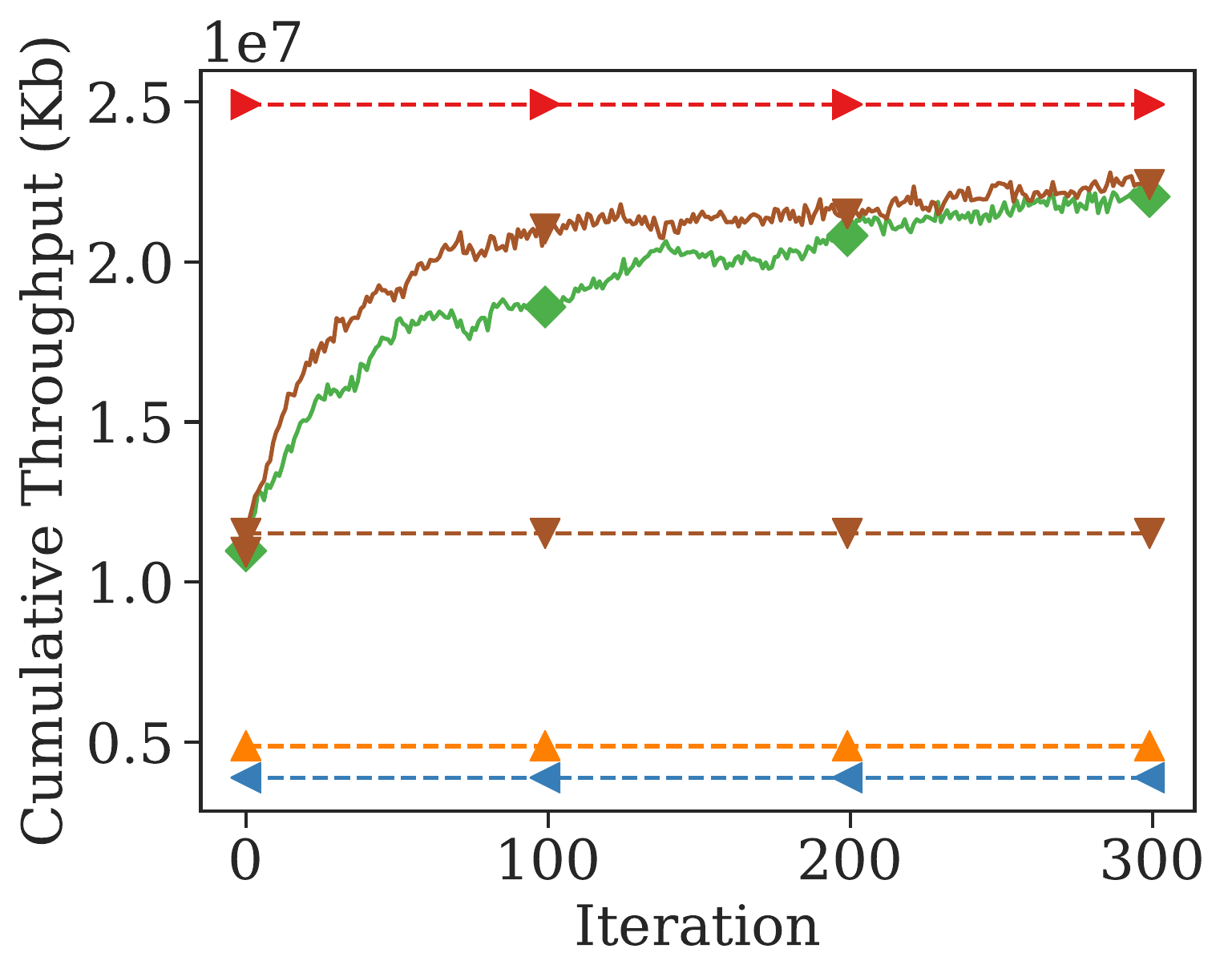}
         \caption{Reward under Video and VoLTE cumulative constraints}
         \label{fig:multi_reward}
     \end{subfigure}
     \hfill
     \begin{subfigure}[t]{0.272\textwidth}
         \centering
         \includegraphics[width=\textwidth]{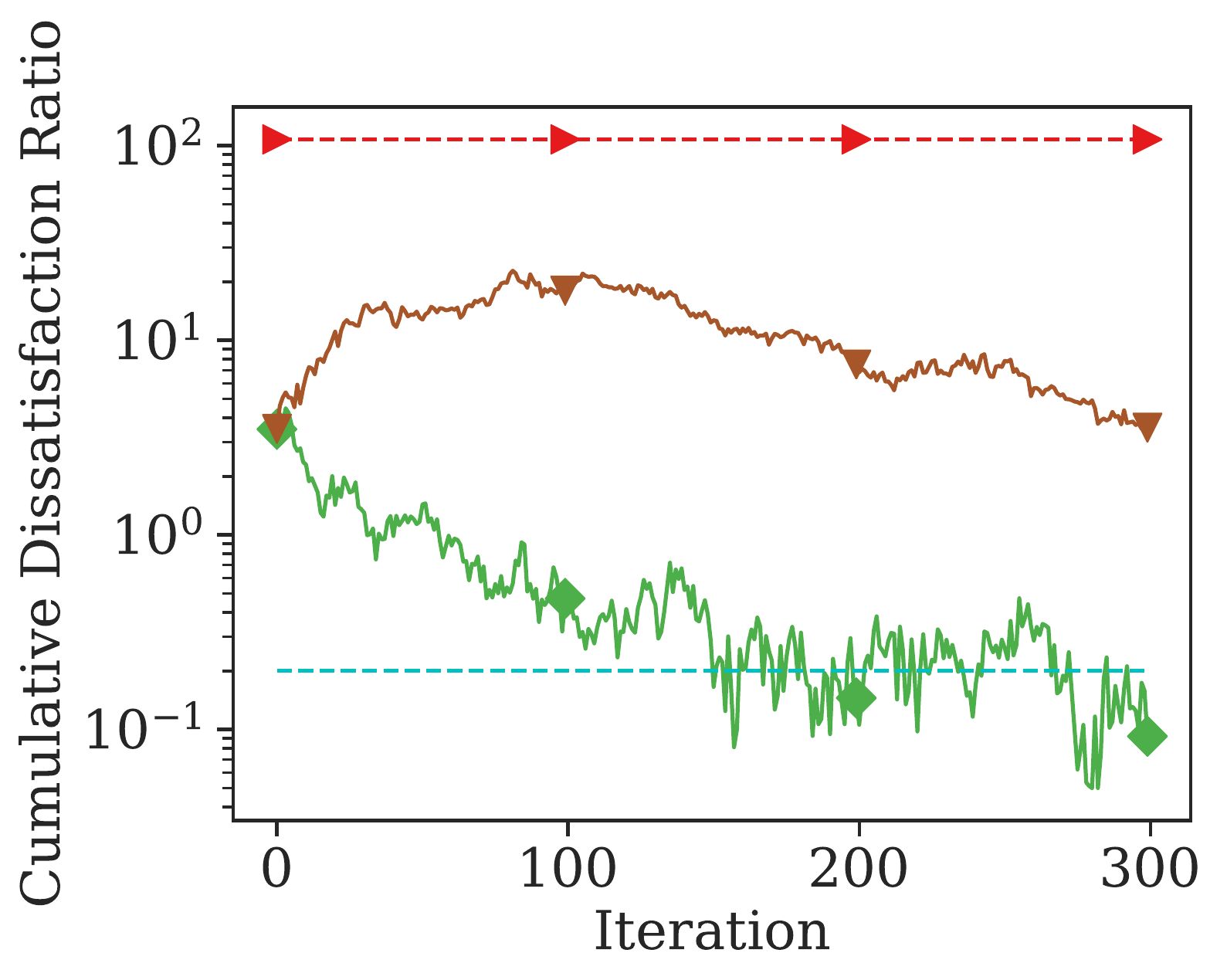}
         \caption{Video cumulative constraint (log-scaled y-axis)}
         \label{fig:multi_Video_cumulative}
     \end{subfigure}
     \hfill
     \begin{subfigure}[t]{0.272\textwidth}
         \centering
         \includegraphics[width=\textwidth]{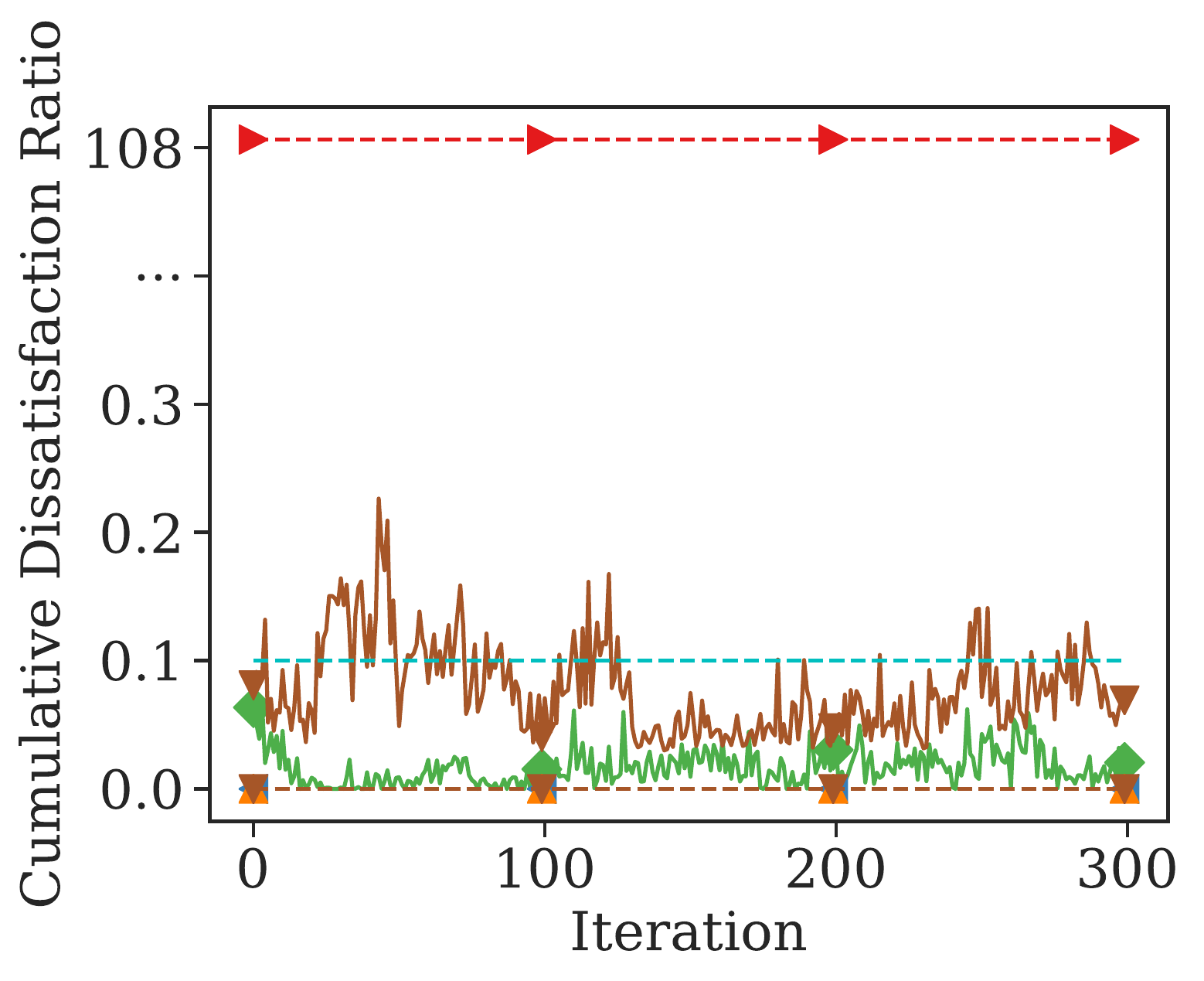}
         \caption{VoLTE cumulative constraint (nonlinear y-axis)}
         \label{fig:multi_VoLTE_cumulative}
     \end{subfigure}
    \begin{minipage}[t]{0.9\textwidth}
        \centering
        \includegraphics[width=\textwidth]{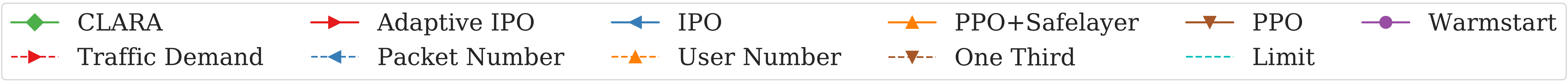}
    \end{minipage}
     \caption{Average performance under different cumulative constraints:  Fig.~\ref{fig:Video_reward},~\ref{fig:Video_cumulative} and \ref{fig:Video_instantaneous} are under single Video cumulative constraint; Fig.~\ref{fig:VoLTE_reward},~\ref{fig:VoLTE_cumulative} and \ref{fig:VoLTE_instantaneous} are under single VoLTE cumulative constraint; Fig.~\ref{fig:URLLC_reward},~\ref{fig:URLLC_cumulative} and \ref{fig:URLLC_instantaneous} are under single URLLC cumulative constraint; Fig.~\ref{fig:multi_reward},~\ref{fig:multi_Video_cumulative} and \ref{fig:multi_VoLTE_cumulative} are under both Video and VoLTE cumulative constraints; the dash lines are limits for different constraints. In Fig.~\ref{fig:Video_cumulative},~\ref{fig:multi_Video_cumulative}, the cumulative values for Packet Number, User Number, and One Third are $0$.
     }
     \label{fig:performance}
\end{figure*}

\subsection{Evaluation Results}

First, we demonstrate the evaluations with one single cumulative constraint selected from cumulative dissatisfaction ratio of Video, VoLTE and URLLC separately, in Fig.~\ref{fig:performance} (a-i). 
Fig.~\ref{fig:Video_reward},~\ref{fig:Video_cumulative},~\ref{fig:VoLTE_reward},~\ref{fig:VoLTE_cumulative},~\ref{fig:URLLC_reward},~\ref{fig:URLLC_cumulative} show the results of long-term reward (throughput) and cumulative constraints (dissatisfaction ratio) with respect to the iterations of policy updates.  Both the rewards and constraints are cumulative values in $500$ time slots. 
For RL-based methods, the rewards and cumulative cost values update during the training process, while the traditional baselines do not improve or adapt to the changes in the environment. 
Even though PPO and PPO+Safelayer can achieve slightly higher reward than CLARA,  its cost value significantly violate the constraint. The original IPO violate the constraint as well. 
CLARA achieves fair high reward and satisfies cumulative constraints.

We also collect the policy after training 
and demonstrate the performance on  the implicit instantaneous constraints (latency) in $500$ time slots, compared with baselines, shown in 
Fig.~\ref{fig:Video_instantaneous},~\ref{fig:VoLTE_instantaneous},~\ref{fig:URLLC_instantaneous}.  
CLARA and PPO+Safelayer satisfy the latency requirements best of all. Adaptive IPO without the Safelayer are not guaranteed to satisfy the instantaneous constraint. 
Remark that in Fig.~\ref{fig:URLLC_reward} and \ref{fig:URLLC_cumulative}, we have tested an extreme case, where the traffic demand of URLLC is larger than the total bandwidth. 
The traffic demand based allocation benefits from knowing the extra information of the total traffic demand volume, 
however, our algorithm still performs almost the same. 
Our method can satisfy the latency constraint while the traffic demand allocation cannot. Moreover, since explicit instantaneous constraints (bandwidth allocation) can be numerically checked. CLARA can always make sure that they are satisfied.

In the setting of multiple cumulative constraints, 
we consider the cumulative constraints on the dissatisfaction ratio of Video and VoLTE users together. 
The results in Fig.~\ref{fig:performance} (j-l) show that CLARA achieves the equivalent best reward as well as the satisfaction of the two cumulative constraints. 
All above, the final policy learned by CLARA outperforms all the baselines in either reward or constraint cost, if not both. 






\subsection{Convergence speed}

In Fig.~\ref{fig:performance}, we have seen that in most cases CLARA converges within $100$ iterations. 
We can further speed up the convergence, 
by utilizing  domain knowledge and traditional baselines to obtain an initial policy other than a purely random one to warm start. 
As shown by ``Warmstart'' in Fig.~\ref{fig:performance} (a-c), we use the traffic demand based allocation as our initial policy, which has a high reward although a high cost violating the constraint;  
after a few iterations, the reward converges to the reward achieved starting with a random policy and the constraint is satisfied. 
The convergence in RL is an essential and challenging issue. A lot of potential future work need to be explored to speed up the convergence, e.g., better exploration techniques and model-based approaches, etc.

\begin{table}[t]
	\centering
	\caption{Statistics of three types of users}
	\begin{tabular}{|c| c c c|} 
		\hline
		& Video &VoLTE &  URLLC \\
		\hline
		Number of user& 50.15 &49.44 &  1.92  \\
		\hline
		Number of packet& 5950.56 &601.50 &  10.70  \\
		\hline
		Traffic Demand (kb)& 7002.35 &187.97 &  295290.37  \\
		\hline
	\end{tabular}
	\label{table:request}
\end{table}

\begin{table}[t]
	\centering
	\caption{Average bandwidth allocation with algorithms}
	\begin{tabular}{|c| c c c|}
		\hline
		(kb)& Video &VoLTE &  URLLC \\
		\hline
	    One third& 34133.33&34133.33&34133.33\\ 
	    \hline
	    User number&50589.71&49871.62 &1938.67\\
	    \hline
	    Packet number&92847.71& 9385.29&167.01\\
		 \hline
	    Traffic Demand&2370.53& 63.63&99965.83\\
	    \hline
		 IPO &25641.12 &8933.74 &67842.22 \\
		 \hline
		 Adaptive IPO&22152.22 & 5320.23 & 74927.54\\
		 \hline
		 PPO & 22141.71 & 11958.00 & 68300.28\\
		 \hline
		 PPO+Safelayer&19563.07 & 7711.20& 75125.81\\
		 \hline
	    CLARA& 27104.34&7372.04 &67923.62\\
		\hline
	\end{tabular}
	\label{table:allocation}
\end{table}

\subsection{Discussions}\label{sec:explanation}
We show the statistics of three types of users to illustrate the diverse characteristics of the three network slices, in Table~\ref{table:request} as suggested by~\cite{li2018deepReinforce}. 
 We also show the average resource allocation with all methods in Table~\ref{table:allocation}.  
Traffic demands from URLLC network slice users dominate over $90\%$ of the total traffic demands, while the number of users and packets of URLLC is far less. 
Given such heterogeneous user demands,  the one third equally allocation method, user number and packet number based method 
allocate much more than enough bandwidth to Video and VoLTE users, resulting in losing the high volume traffic from the URLLC users.  
As for the traffic demand based method, it focuses on the demands of URLLC network slices and works well in reward maximization with the extra information of the real user demands. 
However, this allocation results in the dissatisfaction and leaving of the users of the other two network slices. 
In the extreme case as in Fig.~\ref{fig:URLLC_cumulative}, the resource is not enough to satisfy the URLLC users, and therefore, the URLLC users are also unsatisfied and leave. 
In this case, the allocation with traffic demand based method loses all the users, while our CLARA can adapt to this situation, and allocate more to the users from Video and VoLTE network slices with low latency requirement, to keep a high user participation level, achieving a better overall performance. The other RL-based methods can achieve a good reward performance, but violate the cumulative or instantaneous constraints.

\section{Related work}
\subsection{Network Slicing}
Rost et al. in~\cite{rost2017network} propose that 
network slicing, which enables the multiplexing of virtualized and independent logical networks on the same physical network infrastructure, 
is an efficient solution that addresses the diverse requirements of 5G mobile networks, through analyzing the realization options of a flexible radio access network slicing. 
Different service models and architectures are suggested, such as those in~\cite{ taleb2017multi}. 
The key of network slicing is efficient resource allocation. The dynamic resource allocation in network slicing can improve the resource efficiency~\cite{marquez2018should}. 
However, resource allocation in network slicing is proved to be NP-hard~\cite{zhang2017network,d2019slice} and heuristic algorithms are proposed, while the performances are not guaranteed.

\subsection{Reinforcement Learning in Network Slicing}
RL is  employed  to allocate resources in network slicing.  
In~\cite{li2018deepReinforce}, the authors study the setting of demand-aware network slicing. 
Network bandwidth is allocated to three types of slices: Video, VoLTE, and URLLC; 
the state is the traffic load in each slice; the action is the bandwidth allocation; the reward is the weighted sum of spectrum efficiency (SE) and QoE of the slices. 
Bega et al.~\cite{bega2017optimising} study admission control of two types of network slice requests with different price and service requirements. 
The state is the number of current network slice users in the system; the action is the admission decision of the provider; and the reward is the revenue of the service provider.  
In~\cite{chen2018optimized}, the system allocates the RAN and mobile-edge computing (MEC) slices to provide computation offloading services to mobile users; the state consists of task queue, energy queue, and channel qualities;
the action is the decision to execute the offloading computation; 
the reward is the total utility value of all users. 
Similar works can be found in~\cite{sciancalepore2019rl,hua2019gan}. These works  perform efficiently, however, they do not consider the crucial constraints  in  network  slicing usually require knowing  the traffic demand or user mobility which is not realistic in most cases.
A very recent work~\cite{xu2021constrained} considers constraints in network slicing. However, the Lagrangian Relaxation method they are using to handle the constraints is not as efficient as our IPO~\cite{liu2020ipo} where CLARA relies on.
An earlier short version of our work appears at a workshop~\cite{liu2020constrained} and another short version appeared as a poster~\cite{liu2021resource}. In comparison, in this paper, we expand the system architecture and description, we prove the policy improvement in Theorem~\ref{thm:PolicyImprove}, we show CLARA can handle multiple cumulative constraints that are more common in real-world scenarios, we demonstrate the feasibility of warm-start to speed up convergence, and we expand the evaluation to compare with more baselines. 


\subsection{Constrained Reinforcement Learning}
Although much progress has been made in RL, while the work on constrained RL is limited~\cite{liu2021policy,liu2021cts2}. 
The most common approach is to use Lagrangian relaxation~\cite{chow2017risk,tessler2018reward}. It reduces the constrained optimization problem to an unconstrained one by adding constraint as a weighted penalty to the objective function,
however, Lagrangian relaxation methods are sensitive to the initialization of the Lagrange multipliers and the learning rate and  constraint satisfaction is only guaranteed upon convergence.
Constrained Policy Optimization (CPO)~\cite{achiam2017constrained} employs TRPO to approximate the complex constrained optimization problem with a quadratic optimization. However, CPO can result in the high computation training cost in large scale problems and 
Interior-point Policy Optimization (IPO)~\cite{liu2020ipo} takes advantage of the logarithmic barrier function to handle  cumulative constraints and achieves good performance, while IPO does not have theoretical guarantee for the policy improvement and the hyperparameters are not adaptive tuned. 


As for instantaneous constraints, a natural approach is to project the solutions to the feasible space. 
In~\cite{bhatia2019resource,dalal2018safe}, the authors propose approaches based on Deep Deterministic Policy Gradient (DDPG) and project the output of the infeasible actors neural network to the feasible space by adding a projection safety layer., 


\section{Conclusion}
This work focuses on network slicing with constraints. 
Because of the service diversity, complexity, and hidden structures, prior approaches fail to satisfy the cumulative and instantaneous service constraints. 
To address this challenge,  we formulate the network slicing problem as a Constrained Markov Decision Process (CMDP) and solve it with CLARA, which is a reinforcement learning based resource allocation framework for network slicing with constraints. We evaluate CLARA on the radio resource allocation scenario to illustrate the details of the proposed approach and its advantage. 
Our evaluation results show that constrained reinforcement learning can solve network slicing problems effectively. 
Much future work exists, including stronger theoretical bounds, improved sample efficiency, testbed development, as well as  more dynamic real-world evaluations. 


\bibliographystyle{IEEEtran}
\bibliography{references}

\end{document}